\newtheorem{theorem}{Theorem}[section]
\newtheorem{proposition}[theorem]{Proposition}
\newtheorem{corollary}[theorem]{Corollary}
\newtheorem{lemma}[theorem]{Lemma}
\theoremstyle{definition}
\theoremstyle{remark}
\newtheorem{remark}[theorem]{Remark}
\newtheorem{example}[theorem]{Example}
\newcolumntype{L}[1]{>{\raggedright\let\newline\\\arraybackslash\hspace{0pt}}m{#1}}
\newcolumntype{C}[1]{>{\centering\let\newline\\\arraybackslash\hspace{0pt}}m{#1}}
\newcolumntype{R}[1]{>{\raggedleft\let\newline\\\arraybackslash\hspace{0pt}}m{#1}}
\newcolumntype{P}[1]{>{\centering\arraybackslash}p{#1}}
\newcommand{\mC}{\mathbb C}
\newcommand{\mR}{\mathbb R}
\newcommand{\mZ}{\mathbb Z}
\newcommand{\mN}{\mathbb N}
\newcommand{\sB}{{\mathcal B}}
\newcommand{\sO}{{\mathcal O}}
\newcommand{\bG}{{\bf G}}
\newcommand{\bpm}[1]{\begin{pmatrix}#1\end{pmatrix}}
\newcommand{\rd }{{\rm  d}}
\newcommand{\e}{{\rm  e}}
\newcommand{\eps}{\varepsilon}
\newcommand{\mc}{\,;\,}
\newcommand{\diff}[2]{\frac{\rd  #1}{\rd  #2}}
\DeclareMathOperator*{\esssup}{ess\,sup}
\newcommand{\mOne}{\mathbbm{1}}
\numberwithin{equation}{section}
\numberwithin{figure}{section}
\numberwithin{table}{section}
\newcommand{\tm}{{\texttt m}}
\newcommand{\tn}{{\texttt n}}
\newcommand{\tp}{{\texttt p}}
\title{A linear dissipativity approach to incremental input-to-state stability for a class of positive Lur'e systems}
\author{Violaine Piengeon\thanks{Department of Mathematics, University of Namur, Rue de Bruxelles 61, B-5000 Namur, Belgium, email: {\tt violaine.piengeon@student.unamur.be}} \and Chris Guiver\thanks{School of Computing, Engineering \& the Built Environment, Edinburgh Napier University, Merchiston Campus, Edinburgh, EH10 5DT, UK, email: {\tt c.guiver@napier.ac.uk}}\,
\thanks{Corresponding author}}
\date{Submitted to Journal December 2023, Submitted to Arxiv Feb 2024}	
\let\thetitle\@title
\let\theauthor\@author
\let\thedate\@date
\begin{document}

\maketitle

\begin{abstract}
    Incremental stability properties are considered for certain systems of forced, nonlinear differential equations with a particular positivity structure. An incremental stability estimate is derived for pairs of input/state/output trajectories of the Lur'e systems under consideration, from which a number of consequences are obtained, including the incremental exponential input-to-state stability property and certain input-output stability concepts with linear gain. Incremental stability estimates provide a basis for an investigation into the response to convergent and (almost) periodic forcing terms, and is treated presently. Our results show that an incremental version of the real Aizerman conjecture is true for positive Lur'e systems when an incremental gain condition is imposed on the nonlinear term, as we describe. Our argumentation is underpinned by linear dissipativity theory --- a property of positive linear control systems. 
\end{abstract}

{\bfseries Keywords:} almost periodic function, incremental input-to-state stability, linear dissipativity theory, Lur'e system, positive system

{\bfseries MSC(2020):} 34H15, 47B44, 93C10, 93C28, 93D09, 

\section{Introduction}

Consider the following system of nonlinear controlled and observed differential equations  
\begin{equation}\label{eq:lure_4b}
\left.
    \begin{aligned}
        \dot{x}(t) &= Ax(t)+B_1f(t,C_1x(t))+B_2w(t), \\
         y(t)&= 
         \bpm{
            y_1(t)\\ y_2(t)
        }
        = 
        \bpm{
            C_1\\ C_2
        }
        x(t),
    \end{aligned}
    \right \}
\end{equation}
a so-called forced four-block Lur'e (also Lurie or Lurye) system. Here $A, B_i, C_i$ are appropriately-sized matrices, and $f$ is a (time-varying) nonlinear function.  The variables~$x, w$ and~$y$ in~\eqref{eq:lure_4b} denote the state, forcing term, and output, respectively. Depending on the context, the forcing term may represent a noise, disturbance or control variable. The details of system~\eqref{eq:lure_4b} are given later.

Lur'e systems arise as the feedback connection of a linear control system and a nonlinear output feedback, and a block diagram of this feedback connection is shown in Figure~\ref{fig:lure_4b}. Here the input variable to the linear plant comprises a nonlinear output feedback term and a forcing term, and the two components~$y_1$ and~$y_2$ of the measured variable~$y$ are used for feedback and are a performance output, respectively. As such, the term four-block relates to the four mappings from the two input- to the two output- variables in~\eqref{eq:lure_4b}.

%
%
\begin{figure}[h!]%
\centering
\begin{tikzpicture}
		\coordinate (O) at (0,0);
		\node[draw, thick, minimum width=0.5cm, minimum height=0.85cm, anchor=south west, text width = 2.5cm, align = center] (P) at (O) {$(A,B,C)$};
		\node[draw, thick, minimum width=0.5cm, minimum height=0.85cm, text width=1cm, align=center] (C) at ($(P.270) - (0,1)$) {$f$};
		\draw[thick, -latex] ($(P.170) - (1,0)$) -- (P.170) node[above, pos=0.5] {\footnotesize{$w$}};
		\draw[thick,-latex] (P.10) -- ($(P.10)+(1,0)$)node[above, pos=0.4] {\footnotesize{$y_2$}};
		\draw[thick] (P.350) -- ($(P.350)+(0.5,0)$);
		\draw[thick,-latex] ($(P.350)+(0.5,0)$) |- (C.0)node [right, pos=0.3]{\footnotesize{$y_1$}};
		
		
		\draw[thick] (C.180)  -|  ($(P.190)-(0.5,0)$) node [left, pos=0.8]{\footnotesize{$u$}};
		\draw[thick,-latex] ($(P.190)-(0.5,0)$) -- (P.190);
	\end{tikzpicture}
\caption{Block diagram of forced four-block Lur'e system}%
\label{fig:lure_4b}%
\end{figure}
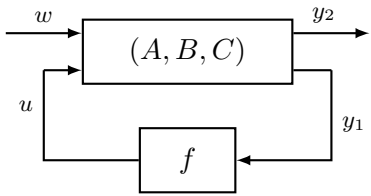%

Lur'e systems are a common and important class of nonlinear control systems, and are consequently well-studied objects. The study of stability properties of Lur'e systems, meaning a variety of possible notions, is called {\em absolute stability theory}, tracing its roots to the 1940s and the Aizerman conjecture~\cite{aizerman1949}, and ordinarily adopts the perspective that the nonlinearity $f$ in~\eqref{eq:lure_4b} is uncertain, being specified rather in terms of given norm or sector bounds. 
An absolute stability criterion is a sufficient condition for stability, usually formulated in terms of the interplay of frequency properties of the linear system and norm or sector bounds of the nonlinearity. Absolute stability theory traditionally comes in one of two strands: the first adopts Lyapunov approaches to deduce global asymptotic stability of unforced (that is, $w=0$) Lur'e systems, as is the approach across~\cite{MR2381711,MR2116013,k02}. The second strand is an input-output approach, pioneered by Sandberg and Zames in the 1960s, to infer $L^2$- and $L^{\infty}$- stability; see, for example,~\cite{MR0490289,MR1946479}. The extensive literature reviewed in~\cite{liberzon2006essays} evidences the interest that absolute stability theory has generated. Classical absolute stability criteria include: the complex  (or generalised) Aizerman conjecture~\cite{hinrichsen1992destabilization}, which is true despite its name; the Circle~\cite{MR2381711,k02} and Popov criteria~\cite{popov1962absolute}; the Integral Quadratic Constraints (IQCs) framework~\cite{megretski1997system}, and; Zames-Falb multipliers. 
We refer the reader to~\cite{carrasco2016zames} for a tutorial on Zames-Falb multipliers, including background on the history and recent perspectives.

%
%
Input-to-State Stability (ISS) and its variants are a suite of stability concepts for controlled (or forced) systems of nonlinear differential equations and, roughly speaking, ensure natural boundedness properties of the state, in terms of (functions of) the initial conditions and inputs. ISS dates back to the 1989 work of Sontag~\cite{sontag1989smooth}, with subsequent developments in the 1990s by Sontag and others across, for example,~\cite{jiang1994small,sontag1995characterizations,sontag1996new}. There is now a vast literature on ISS, with surveys including~\cite{dashkovskiy2011input,sontag2008input}, as well as the more recent text~\cite{MR4592570}. One strength of the input-to-state stability paradigm is that it both encompasses and unifies asymptotic and input-output approaches to stability, and to quote~\cite[Preface]{MR4592570}: {\em ``...revolutionized our view on stabilization of nonlinear systems...''}. 
The practical importance of ISS is that ``small'' external noise, disturbance or unmodelled dynamics result in a correspondingly ``small'' effect on the resulting state. 
%
%
Considerable effort has been denoted to establishing ISS properties for Lur'e systems, originating in the work of Arcak and Teel~\cite{arcak2002input}. These authors initiated a line of enquiry investigating the extent to which classical absolute stability criteria may be strengthened to ensure certain ISS properties of the corresponding forced Lur'e system, with~\cite[Theorem 1]{arcak2002input} being reminiscent of the positivity theorem for absolute stability. The work of Logemann, and his students and collaborators, dating back to~\cite{jayawardhana2009input} and including~\cite{MR3499547,gilmore2020infinite,guiver2020circle,jayawardhana2011circle,sarkans2015input,sarkans2016input} has shown that suitably-strengthened versions of the complex Aizerman conjecture and circle criterion ensure various ISS-type properties. 

%
%
Incremental stability broadly refers to bounding the difference of two arbitrary trajectories of a given system in terms of the difference between initial states and, if included, input terms; see, for instance~\cite{aminzare2014contraction,rwm13,sepulchre2022incremental}. Of course, for linear control systems, incremental stability concepts are equivalent to their corresponding stability concepts via the superposition principle, and this need not be the case for nonlinear systems. Incremental ISS properties have been studied across~\cite{angeli2002lyapunov,angeli2009further} where, amongst a range of results, the somewhat surprising differences between usual ISS properties and their incremental versions are noted. It has recently been commented in~\cite{sepulchre2022incremental} that, roughly speaking, incremental stability concepts are at least as important as stability notions alone and, furthermore, that incremental stability used to have more prominence in the control theory community than perhaps it currently does, and that attention should refocus on this area. One motivation for studying incremental ISS properties, as is well known, is that they provide a toolbox for nonlinear observer design, synchronization-type problems, and the analysis of convergent or (almost) periodic inputs in forced differential equations. Some applications of incremental stability are discussed in~\cite{giaccagli2023further}. Recently, in~\cite{revay2020lipschitz} incremental stability concepts are used to infer robustness properties of certain neural network architectures. Here, we focus on incremental ISS and input-output stability properties of~\eqref{eq:lure_4b}, under the assumption that~\eqref{eq:lure_4b} admits a certain positivity structure which we describe.

%
%
Positive dynamical systems are dynamical systems that leave a positive cone invariant. The academic literature has different conventions over the use of the words ``positive'' and ``nonnegative'' in this context, and they are often used interchangeably. The most natural and obvious positive cone is arguably the nonnegative orthant in real Euclidean space, equipped with the partial ordering of componentwise inequality. There are a myriad of physically motivated examples where the state variables denote necessarily nonnegative quantities, such as abundances, densities, concentrations, flow rates or prices and positive dynamical systems typically arise from mass balance relationships of these quantities --- indeed, quoting~\cite[p. xv]{MR2655815} ``{\em [they] are widespread in biology, chemistry, ecology, economics, genetics, medicine, sociology, and engineering, and play a key role in the understanding of these disciplines.''} The seminal work of Perron and Frobenius in the early 1900s on irreducible and primitive matrices 
underpins the study of positive systems described by linear dynamic equations. A range of nonlinear extensions of these results appear across the literature (see, for example,~\cite{MR3309532}).  There are a number of monographs and textbooks devoted (either in whole or partially) to the study of positive systems, including~\cite{MR1019319,MR1298430,MR2655815,MR3309532}. Positive dynamical systems are closely related to the concept of monotone dynamical systems~\cite{MR2182759,MR1319817}, where the state variables inherit any ordering of the initial states.

Augmenting positive or monotone dynamical systems with input and output variables leads to so-called positive and monotone control systems, see the texts~\cite{MR1784150,MR2655815} or~\cite{angeli2003monotone}, respectively. Again the nomenclature is confusing, as linear positive control systems, often simply called positive systems, are not strictly speaking linear since subtraction cannot always be performed within positive cones. 
 Consequently, there are distinctions between results for positive systems and usual linear control systems, including on reachability and controllability~\cite{MR1839876}, observability~\cite{MR2343058}, realisability~\cite{benvenuti2004tutorial} and stabilisability~\cite{MR2021961}. However, the additional structure associated with positive systems is often helpful and simplifies matters. As an example, the analysis of positive systems benefits from readily constructed classes of scalable Lyapunov functions~\cite{MR3356067}. Writing in 2023, a recent and thorough review of positive systems appears in~\cite{rantzer2021scalable}. 
 Another feature of positive systems, which we utilise presently, is that of so-called {\em linear dissipativity}.


%
%
Dissipativity (or passivity) theory as commonly used in systems and control theory dates back to the seminal work of Willems~\cite{willems1972dissipativeI}, where the notions of supply rate and storage function were introduced, which capture (and generalise) the notion of a system storing and dissipating energy over time. Dissipativity theory is central to control, as consideration of energy is a natural framework for addressing stability. Much attention has been devoted to studying {\em quadratic} supply rates~\cite{willems1972dissipativeII}, as multiple notions of energy (or power) are equal to the product of power-conjugate variables, such as force and velocity in mechanical systems. Two classical notions of quadratic dissipative systems that first arose in circuit theory go by the names of impedance passive and scattering passive, also respectively known as passive and contractive, or bounded real and positive real in the frequency domain. 
Two famous results, sometimes called the Bounded Real Lemma and Positive Real Lemma, provide a complete state-space characterisation of these two notions of dissipativity, respectively, see, for example,~\cite{anderson1973network}. It is known from the work of Haddad {\em et al.}~\cite[Ch. 5]{MR2655815}, there in the context of absolute stability, and that of Briat~\cite{MR3126787} and Rantzer~\cite{MR3356067} in the context of robust control, that positive linear systems may be dissipative with respect to a {\em linear} supply rate and a {\em linear} storage function, that is, are what is termed {\em linearly dissipative}. These ideas seemingly trace their roots back to Ebihara {\em et al.}~\cite{ebihara20111} and have been extended to linear control systems on general positive cones in~\cite{shen2017input}. 
Linear dissipativity captures mass (im)balance, or change in abundance or concentration, in positive systems which may decay, deteriorate or, indeed, dissipate, over time. 

%
%
Presently, our main result is Theorem~\ref{thm:incremental} which shows that a linear dissipativity hypothesis on the underlying linear system, and a certain incremental bound for the nonlinearity $f$, are together sufficient for an incremental stability estimate, which may roughly be interpreted as a weighted mass (im)balance (or dissipation) of differences of input/state/output trajectories of~\eqref{eq:lure_4b}. From this inequality a number of consequences are obtained and presented in Corollary~\ref{cor:incremental_stability} and across Section~\ref{sec:consequences}, where the state- and output- response of~\eqref{eq:lure_4b} to convergent and (almost) periodic forcing terms is investigated. Briefly, the incremental stability estimate we derive is shown to lead to desirable and predictable behaviour in terms of the state- and output-response of~\eqref{eq:lure_4b} to convergent and (almost) periodic forcing terms, in senses we describe. In summary, our application of linear dissipativity theory to positive Lur'e systems with nonlinear terms which satisfy an incremental gain condition to develop incremental input-to-state stability results comprises the main contribution of the present work. Whilst linear dissipativity theory has been used in~\cite{MR2104542} as a basis for Lyapunov stability for positive Lur'e systems (and more generally the feedback connection of positive dynamical systems), ISS concepts do not appear in~\cite{MR2104542}, and neither do incremental stability considerations. Various input-to-state stability results for positive Lur'e inclusions are obtained in~\cite{guiver2019small}, but again there is no focus in that work on incremental stability properties, and so again the overlap with the present work is minimal.

Thus, on the one hand, our present work may be seen as a continuation of~\cite{drummond2022aizerman}, which demonstrates that an Aizerman conjecture for certain positive Lur'e systems is true, to essentially show that the original Aizerman conjecture combined with a linear incremental gain condition on the nonlinear term is sufficient for various incremental stability notions of positive Lur'e systems. On the other hand, the aspects of our work addressing the response to almost periodic forcing terms are inspired by comparable results in~\cite{gilmore2020infinite,gilmore2021incremental}.

%
%
The paper is organised as follows. After gathering requisite preliminaries in Section~\ref{sec:preliminaries}, Section~\ref{sec:lure_systems} introduces the Lur'e systems under consideration, respectively. Our main incremental stability results appear in Section~\ref{sec:incremental_stability}. Section~\ref{sec:consequences} presents consequences of these incremental stability results in terms of the response of the Lur'e systems under consideration to convergent and (almost) periodic forcing terms. Examples are contained in Section~\ref{sec:examples} and summarising remarks appear in Section~\ref{sec:summary}. Some technical material appears in the Appendix. 

%
%
\section{Preliminaries}\label{sec:preliminaries}

\subsection{Notation}
Most notation is standard. As usual, let~$\mN$,~$\mR$,~$\mC$ and~$\mR_+$ denote the set of positive integers, the fields of real and complex numbers and the set of nonnegative real numbers, respectively. For~$\tn, \texttt{m} \in \mN$,~$\mR^{\tn}$ and~$\mR^{\tn \times  \texttt{m}}$ denote~$\tn$-dimensional real Euclidean space and the set of~$\tn \times \texttt{m}$ matrices with real entries, respectively. We equip~$\mR^\tn$ with a (any) vector norm, denoted~$\|\cdot\|$, and use the same symbol to represent the corresponding induced matrix norm. 

\subsection{Matrices, vectors and norms}

Let~$M = (m_{ij})\in \mR^{\tn \times \texttt{m}}$ and~$N = (m_{ij})\in \mR^{\tn \times \texttt{m}}$. With the convention that~$\mR^{\tn \times 1} = \mR^{\tn}$, we write 
\begin{align*}
    M & \geq N\quad\,\mbox{if~$m_{ij}\geq n_{ij}$ for all~$i$ and~$j$},\\
    M &> N\quad\,\mbox{if~$M\geq N$ and~$M\neq N$},\\
    M &\gg N\quad\mbox{if~$m_{ij}>n_{ij}$ for all~$i$ and~$j$},
\end{align*}
with corresponding respective conventions for~$\leq$,~$<$ and~$\ll$. We say that a vector~$v\in\mR^{\tn}$ is nonnegative, positive, or strictly positive if~$v \geq 0$,~$v >0$ or~$v \gg 0$, respectively, noting also that the symbol~$\ggg$ is used in the literature for~$\gg$. We use the same terminology for matrices. The symbols~$\mR^{\tn}_+$ and~$\mR^{\tn \times \texttt{m}}_+$ denote the set of nonnegative vectors and matrices, respectively.

The superscript~${}^{\top}$ denotes both matrix and vector transposition. 

As usual, the identity matrix is denoted by~$I$, the size of which shall be consistent with the context. 

Let~$M\in\mR^{\tn \times \tn}$ denote a square matrix. Recall that~$M$ is called Metzler if all its off-diagonal entries are nonnegative, and reducible if there exists a permutation matrix~$P$ such that
\[
    PMP^\top = 
    \bpm{
    *&0 \\ * & *
    },
\]
where~$*$ denotes an unimportant matrix block. The matrix~$M$ is said to be irreducible if~$M$ is not reducible. We let $\rho(M)$ denote the spectral radius of $M$, and $M$ is called Hurwitz (sometimes also asymptotically stable) if every eigenvalue of $M$ has negative real part. Hurwitz matrices are invertible.

%
%
We shall make extensive use of weighted one-norms. For~$\tn \in \mN$ and a vector~$z\in\mR^{\tn}$ with~$i$-th component~$z_i$, we let~$|z|$ denote the vector with~$i$-th component equal to~$|z_i|$, with the corresponding convention for matrices.  For~$v \in \mR^{\tn}_+$, define the semi-norm~$\lvert \cdot \rvert_v$ on~$\mR^\tn$ by
\begin{equation*}\label{p norm}
    |z|_v := v^{\top}|z| \quad \forall \: z\in\mR^{\tn}. 
\end{equation*}
Observe that~$|z|_v := v^{\top}z$ for all~$z\in\mR^{\tn}_+$. It is routine to show that~$\lvert \cdot \rvert_v$ is a norm if, and only if,~$v$ is strictly positive.
The terminology weighted one-norms stems from the equalities
\[
    \| z \|_1 := \sum_{i =1}^{\tn} |z_i| = \lvert z \rvert_\mathbbm{1} \quad \forall \: z \in \mR^\tn\,, 
\]
where~$\| \cdot \|_1$ denotes the usual Euclidean one-norm on~$\mR^{\tn}$ and the symbol~$\mathbbm{1}$ denotes the vector with every component equal to one.

\subsection{Function spaces}\label{sec:function_spaces}

As usual, for~$1 \leq s \leq \infty$, we let~$L^s(\mR_+,\mR^{\tn})$ denote the normed space of (equivalence classes of) measurable~$s$-integrable ($s$ finite), or essentially bounded ($s$ infinite), functions~$f : \mR_+ \rightarrow \mR^{\tn}$. We let~$L^s_{\rm loc}(\mR_+,\mR^{\tn})$ denote the usual localised version.
For~$\alpha \in \mR$  we define the weighted~$L^s$ space
\[ L^s_\alpha(\mR_+,\mR^\tn) := \big\{ f \in L^s_{\rm loc}(\mR_+, \mR^\tn) \: : \:  \| f \|_{L^s_\alpha(\mR_+)} = \| t \mapsto \e^{\alpha t} f(t) \|_{L^s(\mR_+)} < \infty\big \}\,,\]
which is a Banach space when equipped with the norm~$\| \cdot \|_{L^s_\alpha(\mR_+)}$.

For nonnegative~$r \in \mR^\tn_+$, and~$0 \leq t_0 \leq t_1 \leq \infty$, we write
\begin{equation}\label{eq:Lp_norms_weighted}
    \| f \|_{L^s(t_0,t_1 \mc r)} := \Big(\int_{t_0}^{t_1} \lvert f(\theta) \rvert_r^s \, \rd \theta\Big)^{\frac{1}{s}} \quad 1 \leq s < \infty \quad \text{and} \quad \| f \|_{L^\infty(t_0,t_1 \mc r)} := \esssup_{\theta \in [t_0, t_1]} \lvert f(\theta) \rvert_r\,,
\end{equation} 
and
%
\[\| f \|_{L^s_\alpha(t_0,t_1\mc r)} := \| t \mapsto \e^{\alpha t} f(t) \|_{L^s(t_0,t_1 \mc r)} \quad \forall \: f \in L^s_{\rm loc}(\mR_+, \mR^\tn)\,.\]
These are semi-norms in general, and are norms when~$r$ is strictly positive and, by norm equivalence, 
are equivalent to the usual~$L^s$- and~$L^s_\alpha$-norms on the interval~$(t_0,t_1)$, respectively. Finally, we write
\[ \| f \|_{L^s_\alpha(\mR_+ \mc r)} := \| f \|_{L^s_\alpha(0,\infty\mc r)} \quad \forall \: f \in L^s(\mR_+,\mR^\tn)\,.\]
Let~$R = \mR_+$ or~$\mR$. For~$\tau \in R$ we denote by~$\sigma_{\tau} : L^1_{\rm loc}(R, \mR^\tn) \to L^1_{\rm loc}(R,\mR^\tn)$ the translation operator by~$\tau$, that is,~$(\sigma_\tau f)(t) = f(t+ \tau)$ for all~$t \in R$. This is a left-shift when~$\tau \geq 0$. With a slight abuse of notation to avoid a proliferation of symbols, we use the same symbol for all such translation operators, that is, independently of the Euclidean space in~$L^1_{\rm loc}(R,\mR^\tn)$.

Table~\ref{tab:notation} summarises the symbolic quantities used, which we have sought to use consistently, and indicates instances where the notation appears as well. The table is intended to  assist with interpretation of notation, and we do not claim that the table is exhaustive.
\begin{table}[h!]
    \centering
    	\bgroup
	\def\arraystretch{1.2}%
{\footnotesize
    \begin{tabular}{|C{2.75cm}|L{8cm}|L{4.5cm}|} \hline
       {\bfseries Notation } & {\bfseries Description} & {\bfseries Illustrative equations}  \\ \hline
      ~$t$,~$t_0$,~$t_1$,~$\theta$,~$\tau$ & times (function arguments) & Throughout \\
      ~$(u,w)$,~$x$,~$y$ & input, state, output variables & \eqref{eq:lti} or~\eqref{eq:lure_4b}\\
      ~$\tm$,~$\tn$,~$\tp$ & input-, state- and output-space dimensions & \eqref{eq:lti} or~\eqref{eq:lure_4b}\\
      ~$A$,~$B$,~$C$,~$D$,~$\bG$ & matrix data of control systems, transfer function & \eqref{eq:lti} or~\eqref{eq:lure_4b} \\
       ~$p$,~$q$,~$r$,~$l$,~$k$ & vectors arising in linear dissipation inequalities & \eqref{eq:KYP_equations} or~\eqref{eq:linear_dissipative} \\
      ~$\alpha$,~$\xi$ & positive constants, rates, weightings & \eqref{eq:linear_dissipative}, \eqref{eq:incremental_main_H2} or~\eqref{eq:UaLsloc}\\
      ~$f$,~$\Delta$ & nonlinear function, linear bound term & \eqref{eq:lure_4b} or hypothesis~\ref{A2} \\
      ~$s$,~$s_0$,~$s_1$ & integrability constants & \eqref{eq:Lp_norms_weighted}, \eqref{eq:incremental_corollary_H1} or~\eqref{eq:UaLsloc}\\ 
      ~$\sigma_\tau$ & translation operator & Section~\ref{sec:function_spaces}, \eqref{eq:shift}\\
      ~$\sB$ & behaviour --- set of trajectories of Lur'e systems & Section~\ref{sec:lure_systems}\\\hline
    \end{tabular}}
    \egroup
    \caption{Commonly used notation}
    \label{tab:notation}
\end{table}

Finally, some additional function spaces are required for our treatment of almost periodic functions, and these are introduced in Section~\ref{sec:periodic}.

%
%
\subsection{A positive linear dissipation inequality}\label{sec:linear}

Whilst our overall focus of the present work relates to Lur'e systems, their analysis is underpinned by properties of the underlying linear system. Particularly relevant here are linear control systems which are positive with a certain linear dissipativity property, and in this section we gather required material.

For which purpose, consider the following controlled and observed system of linear differential equations
\begin{equation}\label{eq:lti}
    \dot x = A x + Bu, \quad y = Cx + Du\,,
\end{equation}
where~$A \in \mR^{\tn \times \tn}$,~$B \in \mR^{\tn\times \tm}$,~$C \in \mR^{\tp \times \tn}$ and~$D \in \mR^{\tp \times \tm}$ for $\tn, \tm, \tp \in \mN$. 

The linear system~\eqref{eq:lti} is called (internally) positive if~$A$ is Metzler and~$B,C,D \geq 0$. Metzler matrices go by a number of other terms in the literature, such as essentially nonnegative in~\cite[p.\ 146]{MR1019319} or quasi-positive in~\cite[p.\ 60]{MR1319817}. They are the continuous-time analogue of nonnegative matrices which arise naturally in discrete-time nonnegative dynamical
systems (see, for example,~\cite{luenberger1979introduction,MR1784150}). Further background on Metzler matrices appears in, for example,~\cite{MR1019319,varga2000matrix}.

We recall the following well-known facts related to the Metzler matrix~$M \in \mR^{n \times n}$.
\begin{enumerate}[label = (F\arabic*)]
    \item \label{ls:F1}~$\e^{Mt}$ is a nonnegative matrix for all~$t \geq 0$.
    \item \label{ls:F3}~$M$ is Hurwitz if, and only if,~$M^{-1} \leq 0$.
\end{enumerate}
For proofs of these statements, we refer to, for example~\cite[Theorem 3]{MR0277550}, 
and \cite[characterization N38 in section 6.2]{MR1298430} or \cite[characterization F15]{MR0444681}, respectively. Fact one is in fact a characterisation of Metzler matrices. 

The transfer function associated with~\eqref{eq:lti} is~$\bG(z) = C(zI-A)^{-1}B + D$ which is a proper rational function of the complex variable~$z$. For positive systems with~$A$ Hurwitz, it follows that~$\bG(0) \geq0$. 

%
%
\begin{lemma}\label{lem:dissipation}
Consider the positive linear control system~\eqref{eq:lti}, let~$\xi  \geq 0$ be given and assume that~$A + \xi  I$ is Hurwitz. Let~$(r,q) \in \mR^\tm_+ \times \mR^\tp_+$ be given. The following statements are equivalent.
\begin{enumerate}[label = {\rm(\roman*)}]
    \item \label{ls:dissipation_i} there exist~$p,l \in \mR^\tn_+$ and~$k \in \mR^\tm_+$ satisfying
\begin{equation}\label{eq:KYP_equations}
\begin{aligned}
 p^\top A + \xi  p^\top + q^\top C + l^\top & = 0\,, \\
 p^\top B + q^\top D -r^\top + k^\top & = 0\,.
\end{aligned}
\end{equation}
\item \label{ls:dissipation_ii}~$r^\top - q^\top \bG(-\xi ) \geq 0$
\end{enumerate}
If  either statement above holds and, additionally,
\begin{equation}\label{eq:positive_observable}
    q^\top C (-A)^{-1} \gg 0\,,
\end{equation}
then~$p \gg 0$.
\end{lemma}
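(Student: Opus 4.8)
The plan is to reduce everything to the Metzler/Hurwitz facts recorded before the lemma. Since $A$ is Metzler, so is $A_\xi := A + \xi I$ (adding $\xi I$ alters only diagonal entries), and by hypothesis $A_\xi$ is Hurwitz; moreover $A$ itself is Hurwitz, because every eigenvalue $\lambda$ of $A$ satisfies $\re(\lambda) < -\xi \le 0$. I would first record, via Fact~\ref{ls:F3}, the two nonnegativity statements $(-A_\xi)^{-1} \ge 0$ and $(-A)^{-1} \ge 0$, together with the identity $\bG(-\xi) = C(-A_\xi)^{-1}B + D$, which follows at once from $(-\xi I - A)^{-1} = (-A_\xi)^{-1}$.

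For \ref{ls:dissipation_i} $\Rightarrow$ \ref{ls:dissipation_ii} I would solve the first equation of~\eqref{eq:KYP_equations} for $p^\top$, giving $p^\top = (q^\top C + l^\top)(-A_\xi)^{-1}$, and substitute into the second equation. Using the identity for $\bG(-\xi)$, a short computation yields $r^\top - q^\top \bG(-\xi) = l^\top (-A_\xi)^{-1} B + k^\top$, whose right-hand side is nonnegative because $l,k \ge 0$, $B \ge 0$ and $(-A_\xi)^{-1} \ge 0$; this is precisely~\ref{ls:dissipation_ii}. For the converse \ref{ls:dissipation_ii} $\Rightarrow$ \ref{ls:dissipation_i} I would exhibit an explicit solution: take $l = 0$, set $p^\top := q^\top C(-A_\xi)^{-1} \ge 0$, and $k^\top := r^\top - q^\top \bG(-\xi)$. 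The first equation of~\eqref{eq:KYP_equations} then holds by construction, the second reduces (again via the $\bG(-\xi)$ identity) to the defining equation for $k$, and $k \ge 0$ is exactly hypothesis~\ref{ls:dissipation_ii}, so $(p,l,k)$ lie in the required nonnegative cones. Both directions are routine linear algebra once Fact~\ref{ls:F3} is in hand.

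For the supplementary strict-positivity claim I would start from the first equation of~\eqref{eq:KYP_equations}, which forces $p^\top = (q^\top C + l^\top)(-A_\xi)^{-1} \ge q^\top C(-A_\xi)^{-1}$ for \emph{every} admissible $(p,l,k)$, since $l^\top (-A_\xi)^{-1} \ge 0$. It thus suffices to show $q^\top C(-A_\xi)^{-1} \gg 0$, and here lies the main obstacle: the hypothesis~\eqref{eq:positive_observable} is phrased in terms of $-A$, not $-A_\xi$. The key device is the integral representation $(-M)^{-1} = \int_0^\infty \e^{Mt}\,\rd t$, valid for Hurwitz Metzler $M$ (the integrand is nonnegative by Fact~\ref{ls:F1}, and convergence follows from $M$ Hurwitz). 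Applying this to $M = A$ and $M = A_\xi$, and using $\e^{A_\xi t} = \e^{\xi t}\e^{At}$ with $\e^{\xi t} \ge 1$ and $\e^{At} \ge 0$ for $t \ge 0$, yields the entrywise bound $(-A_\xi)^{-1} = \int_0^\infty \e^{\xi t}\e^{At}\,\rd t \ge \int_0^\infty \e^{At}\,\rd t = (-A)^{-1}$. Left-multiplying by $q^\top C \ge 0$ preserves this inequality, so $q^\top C(-A_\xi)^{-1} \ge q^\top C(-A)^{-1} \gg 0$, and therefore $p \gg 0$. The only points needing care are justifying the integral representation and the entrywise monotonicity of the integrand, both of which follow from nonnegativity of exponentials of Metzler matrices.
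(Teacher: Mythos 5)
Your proposal is correct and takes essentially the same approach as the paper's proof: the same substitution $p^\top = (q^\top C + l^\top)(-(A+\xi I))^{-1}$ for \ref{ls:dissipation_i}$\,\Rightarrow\,$\ref{ls:dissipation_ii}, the same explicit witness $p^\top := q^\top C(-(A+\xi I))^{-1}$, $l:=0$, $k := r - \bG(-\xi)^\top q$ for the converse, and the same integral-representation comparison yielding $p^\top \geq q^\top C(-(A+\xi I))^{-1} \geq q^\top C(-A)^{-1} \gg 0$ for the strict-positivity claim. The only cosmetic differences are that you obtain Hurwitzness of $A$ from the elementary spectral shift ($\re(\lambda) < -\xi \leq 0$) where the paper cites monotonicity of the spectral abscissa for Metzler matrices, and that you establish the entrywise bound $(-(A+\xi I))^{-1} \geq (-A)^{-1}$ before multiplying by $q^\top C$, whereas the paper keeps $q^\top C$ inside the integral.
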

Some remarks are in order. Observe that the condition in statement~\ref{ls:dissipation_ii} is a simple componentwise inequality of vectors which, for given~$\bG$ can always be satisfied if~$r, q$ are sufficiently (componentwise) large or small, respectively, and, is independent of positive multiplicative constants. In other words, a choice of scaling is available in the pair $(r,q)$. Lemma~\ref{lem:dissipation} is part of a larger result, which one would naturally call a~$(r,q)$-linearly dissipative lemma as it provides a state-space characterisation of the~$(r,q)$-linear dissipative property for positive linear control systems. Indeed, it can be shown that the above statements are additionally equivalent to the inequality
\[ \| y \|_{L^1(t_1,t_2 \mc q)} \leq \| u \|_{L^1(t_1,t_2 \mc r)} \quad \forall \: 0 \leq t_1 \leq t_2\,,\]
holding for all trajectories~$(u,x,y)$ of~\eqref{eq:lti} with nonnegative~$u\in L^1_{\rm loc}(\mR_+, \mR^\tm)$ and~$x(0) = 0$.

Expanded versions of the linear dissipativity lemma which essentially (but not exactly) include Lemma~\ref{lem:dissipation} appear in~\cite[Theorem 3]{ebihara20111}, ~\cite[Lemma 1]{MR3126787}, \cite[Theorems 4 and 5]{rantzer2011distributed}, and \cite[Proposition 5]{MR3356067}, broadly in the context of robust control of positive linear systems. In each of these cases~$r = \mOne$ and~$q = \gamma \mOne$ are taken as (scaled) vectors of ones, for scalar~$\gamma >0$. The authors of~\cite{MR2104542,MR2655815} refer to the~$(\mOne,\mOne)$-linear dissipative property as {\em non-accumulative}, cf.~\cite[Definition 5.4]{MR2655815}. Another linear dissipativity lemma appears in~\cite[Theorem 6.2]{MR2104542}; see also~\cite[Chapter 5]{MR2655815}, broadly with the focus of providing sufficient conditions for asymptotic stability of positive Lur'e systems, as in~\cite[Theorem 7.2]{MR2104542}.

In this sense, and as noted in~\cite[p.\ 51]{MR2104542}, the linearly dissipative lemma is a linear analogue of the celebrated positive real lemma, also termed the (K)alman-(Y)akubovich-(P)opov lemma. Linear dissipativity concepts, rather than their quadratic counterparts, are often more appropriate for describing mass conservation or balance laws which typically underpin positive systems. By way of background, we refer the reader to, for example,~\cite{anderson1973network,rantzer1996kalman} for more background on the KYP Lemma.

The condition~\eqref{eq:positive_observable} plays a role in the work~\cite{franco2021persistence} and a number of characterisations of~\eqref{eq:positive_observable} appear in~\cite[Proposition 3.2]{franco2021persistence}. Intuitively, it is a necessary and sufficient condition for the pair~$(q^\top C, A)$ to be {\em positive trajectory observable}, meaning if~$y(t) = q^\top Cx(t)$, and~$\dot x(t) = Ax(t)$ with~$x(0) \in \mR^n_+$, then the condition~$y(t)= 0$ for all~$t \geq 0$ implies that~$x(0) = 0$. The positive trajectory observability property is called {\em zero-state observable} in~\cite[Definition 5.3]{MR2104542}, and appears for discrete-time control systems in~\cite[Definition 4.4]{MR2343058}. We comment that~\eqref{eq:positive_observable} is equivalent to 
\[ \ker \sO(q^\top C, A ) \cap \mR^n_+ = \{0 \}\,,\]
where~$\sO(q^\top C, A)$ denotes the usual observability matrix of the pair~$(q^\top C, A)$, but does not enforce (usual) observability of the pair~$(q^\top C, A)$. 
%

%
%
\begin{proof}[Proof of Lemma~\ref{lem:dissipation}]
From fact~\ref{ls:F3} we have that~$(-(A+\xi  I))^{-1} \geq 0$. If statement~\ref{ls:dissipation_i} holds, then routine calculations show that
\[ p^\top = \big(q^\top C + l^\top \big)(-(A+\xi  I))^{-1}\,,\] 
and so
\begin{align*}
0&= p^\top B + q^\top D - r^\top + k^\top = q^\top \big( C (-(A+\xi  I))^{-1}B + D)  -r^\top + k^\top + l^\top (-(A+\xi  I))^{-1} B \\
& \geq q^\top \bG(-\xi ) -r^\top\,. 
\end{align*}
since~$k^\top, l^\top (-(A+\xi  I))^{-1} B \geq 0$. Rearranging the above gives statement~\ref{ls:dissipation_ii}. For the converse, we take~$p^\top := q^\top C(-(A+\xi  I))^{-1}$ which is readily shown to satisfy~\eqref{eq:KYP_equations} with~$l:=0$ and~$k := r - \bG(-\xi)^\top q \geq 0$. 

For the final claim, note that, since both~$A$ and~$A+\xi  I~$ are Metzler, and $A+\xi I$ is Hurwitz, it follows that $A$ is Hurwitz by~\cite[Corollary 4.3.2, p.\ 60]{MR1319817} (monotonicity of the spectral abscissa for Metzler matrices). Therefore, condition~\eqref{eq:positive_observable} yields
\[0 \ll q^\top C (-A)^{-1} = \int_{\mR_+} q^\top C \e^{At}\, \rd t = \int_{\mR_+} q^\top C \e^{(A+\xi  I)t} \e^{-\xi  t}\, \rd t \leq q^\top C(-(A+\xi  I))^{-1}\,.\]
Since~$p^\top \geq q^\top C(-(A+\xi  I))^{-1}$, we have~$p \gg 0$ as required.
\end{proof}

\section{Lur'e systems}\label{sec:lure_systems}

We consider the forced four-block Lur'e system~\eqref{eq:lure_4b}. Here~$A\in \mR^{\tn \times \tn}, B_i\in \mR^{\tn \times \tm_i}$ and~$C_i \in \mR^{\tp_i \times \tn}$ for~$i = 1,2$, and~$f:\mR_+ \times \mR^{\tp_1} \to \mR^{\tm_1}$ is a time-varying nonlinearity.  The variables~$x, w$ and~$y$ denote the state, forcing term and output of system~\eqref{eq:lure_4b}, taking values in~$\mR^\tn$,~$\mR^{\tm_2}$ and~$\mR^{\tp_1+\tp_2}$, all respectively.

Lur'e system~\eqref{eq:lure_4b} arises as the feedback connection of the linear control system~\eqref{eq:lti} with 
\[ B = \bpm{B_1 &B_2}, \quad C = \bpm{C_1 \\ C_2}\,,\]
and input variable~$(u,w)$, and the static nonlinear feedback~$u = f(y_1)$. As such, the two inputs~$u$ and~$w$ are a control variable and an external forcing(or disturbance)  term, respectively. The variables~$y_1$ and~$y_2$ denote the components of the measured variable~$y$ used for feedback and as a performance variable, respectively.

The transfer function~$\bG$ associated with the linear control system in~\eqref{eq:lure_4b} is given by
\[ \bG(s) := \bpm{C_1 \\ C_2}(sI-A)^{-1} \bpm{B_1 & B_2 }\,,\]
and we let~$\bG_{ij}(s) := C_i (sI - A)^{-1} B_j$ for~$i,j = 1,2$.

%
%
The following assumptions on the model data in~\eqref{eq:lure_4b} are imposed throughout. 
\begin{enumerate}[label = {\bfseries (A\arabic*)}, ref = {\rm (A\arabic*)}]
     \item \label{A1} The matrix~$A$ is Metzler, and~$B_1$,~$B_2$,~$C_1$ and~$C_2$ are nonnegative matrices.
    \item \label{A2} The  function~$f$ satisfies:
    \begin{itemize}
        \item~$t\mapsto f(t,\zeta)$ is in~$L^1_{\rm loc}(\mR_+,\mR^{\tm_1})$ for every~$\zeta \in \mR^{\tp_1}$;
        \item there exists~$\zeta_0 \in \mR^{\tp_1}$ such that~$t \mapsto f(t,\zeta_0)$ is essentially bounded, and;
\item 
    \begin{equation}\label{eq:f_incremental}
       \esssup_{t \geq 0} |f(t,\zeta_1)-f(t,\zeta_2)|\leq \Delta|\zeta_1-\zeta_2|\quad\forall \: \zeta_1,\zeta_2\in\mR^{\tp_1}\,,
    \end{equation}
    for some fixed~$\Delta\in\mR_+^{\tm_1\times \tp_1}$. 
    \end{itemize}
       \end{enumerate}
The first two items in hypothesis~\ref{A2} are mild regularity requirements, and are always satisfied if~$f$ is independent of its first variable. Observe that the inequality in~\eqref{eq:f_incremental} is a componentwise inequality of vectors in~$\mR^{\tm_1}$. A consequence of hypothesis~\ref{A2} is that~$f$ is measurable in its first variable, for each fixed second variable, and globally Lipschitz in its second variable, uniformly with respect to its first variable. We comment that so-called {\em slope restricted} nonlinearities, which are common hypothesis in many absolute stability criteria, satisfy an inequality of the form~\eqref{eq:f_incremental}. 

We shall assume throughout that the forcing term~$w : \mR_+ \rightarrow \mR^{\tm_2}$ in~\eqref{eq:lure_4b} is locally integrable, that is,~$w\in L_{\rm loc}^1(\mR_+,\mR^{\tm_2})$. Given such a~$w$, we call the pair~$(w,x)$ a trajectory of~\eqref{eq:lure_4b} if~$x$ is a locally absolutely continuous function~$x: \mR_+\rightarrow \mR^{\tn}$ which satisfies the differential equation in~\eqref{eq:lure_4b} almost everywhere. The collection of all trajectories of~\eqref{eq:lure_4b} is called the behaviour of~\eqref{eq:lure_4b} and is denoted~$\sB$. If~$f(\cdot\,,0) = 0$, then~$(0,0)$ is a constant trajectory of~\eqref{eq:lure_4b}, but we do not impose this assumption throughout. In light of the combined assumptions in~\ref{A2}, it follows from known results in the theory of ordinary differential equations (see, for example~\cite[Appendix C]{sontag2013mathematical}) that, for every~$z\in \mR^\tn$, there exists a unique trajectory~$(w,x)$ of~\eqref{eq:lure_4b} with~$x(0) = z$ so that, in particular, the set of trajectories~$\sB$ is not empty. 

It is straightforward to show that trajectories of~\eqref{eq:lure_4b} are characterised as those locally absolutely continuous functions~$x$ which satisfy the so-called variation of parameters formula
\begin{equation}\label{eq:vop}
    x(t_1+t_0) = \e^{At_1}x(t_0) + \int_{t_0}^{t_1 + t_0} \e^{A(t_1+t_0-t)} \big( B_1f(t,C_1 x(t)) + B_2 w(t)\big) \, \rd t \quad \forall \: t_1, t_0 \geq 0\,,
\end{equation}
which, note, is well defined when~$w\in L_{\rm loc}^1(\mR_+,\mR^{\tm_2})$. It is for this reason that we only impose local integrability of~$w$, rather than local essential boundedness, which is the more common assumption in ISS literature when considering general nonlinear control systems. Physically motivated forcing terms will typically belong to~$L^\infty_{\rm loc}(\mR_+, \mR^{\tm_2})$, though.

Given~$(w,x) \in \sB$, for convenience we call~$x$ and~$y$ in~\eqref{eq:lure_4b} a state trajectory and output trajectory, respectively, which we also write as~$x(\cdot \mc x_0,w)$ and~$y(\cdot \mc x_0,w)$ when~$x(0) =x_0$ for~$x_0 \in \mR^\tn$. The triplet~$(w,x,y)$ is called an  input/state/output trajectory of Lur'e system~\eqref{eq:lure_4b} if~$(w,x) \in \sB$ and~$y = y(\cdot \mc x(0),w)$. 

Observe that we do not impose that~$f(t, \cdot)$ is nonnegative valued (this allows for negative feedback by putting the minus sign on the~$f$, rather than the~$B_1$ term). However, the typical situation for positive systems is that~$f(t, \cdot)$ maps~$\mR_+^{\tp_1}$ into~$\mR_+^{\tm_1}$ and, in this case, a consequence of hypothesis~\ref{A1}, fact~\ref{ls:F1} and the variation of parameters formula~\eqref{eq:vop} with~$t_0 =0$ is that~$x(t) \geq 0$ for all~$t \geq 0$ whenever~$x(0) \geq 0$ and~$w \geq 0$ almost everywhere. This positive invariance property may also be shown by invoking, for example,~\cite[Proposition 2.2]{MR2655815}. Moreover, if~$A$ is Hurwitz, then~$\bG_{ij}(0)$ is well defined and nonnegative for all~$i,j = 1,2$.

We record the following hypotheses which will be used throughout (but not necessarily simultaneously in any of the following results).
\begin{enumerate}[label = {\bfseries (H\arabic*)}, ref = {\rm (H\arabic*)}, start = 1]
    \item \label{H1} There exist~$\xi>0$ and strictly positive~$p\in\mR^{\tn}_+$ such that 
    \[    p^\top(A+B_1\Delta C_1)\leq -\xi p^\top\,. \]
    \item \label{H2} There exist~$\xi>0$, strictly positive~$p\in\mR^{\tn}_+$, nonnegative~$l \in \mR^\tn_+$,~$q\in\mR^{\tp_2}_+$ and~$r,k\in\mR^{{\tm_2}}_+$ such that
    \begin{equation}
    \label{eq:linear_dissipative}
        \left.
        \begin{aligned}
             p^\top (A+B_1\Delta C_1) +\xi p^\top + q^\top C_2 + l^\top  & = 0\\
            p^\top  B_2 -r^\top  +k^\top  & = 0.
        \end{aligned}
        \right \}
    \end{equation}
\end{enumerate}
%
%
The following two lemmas provide characterisations of hypothesis~\ref{H1} and~\ref{H2}, respectively. To minimise the disruption to the main arguments, we relegate these proofs to the Appendix.

\begin{lemma}\label{lem:H1} Let~$A, B_1, C_1$ be as in~\ref{A1}, and fix~$\Delta \in \mR^{\tm_1 \times \tp_1}_+$.     The following statements are equivalent.
   \begin{enumerate}[label = {\rm (\roman*)}]
   \item \label{ls:H1_i} \ref{H1} holds
   \item \label{ls:H1_ii}~$A + B_1 \Sigma C_1$ is Hurwitz for all~$\Sigma \in \mR^{\tm_1 \times \tp_1}$ with~$0 \leq \Sigma \leq \Delta$.
   \item \label{ls:H1_iii}~$A$ is Hurwitz and~$\rho(\bG_{11}(0)\Delta)<1$.
    \end{enumerate}
\end{lemma}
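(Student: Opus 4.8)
The plan is to prove the three statements equivalent cyclically, as \ref{ls:H1_i} $\Rightarrow$ \ref{ls:H1_ii} $\Rightarrow$ \ref{ls:H1_iii} $\Rightarrow$ \ref{ls:H1_i}, after recording two structural observations. First, since $A$ is Metzler by~\ref{A1} and $B_1 \Sigma C_1 \geq 0$ whenever $\Sigma \geq 0$ (as $B_1, C_1 \geq 0$), every matrix $A + B_1 \Sigma C_1$ with $\Sigma \geq 0$ is again Metzler; in particular so is $A + B_1 \Delta C_1$. Second, I would use repeatedly the standard characterisation of Hurwitz Metzler matrices (one of the equivalent conditions in~\cite[characterization N38, section 6.2]{MR1298430} or~\cite[characterization F15]{MR0444681}): a Metzler matrix $N$ is Hurwitz if, and only if, there exists $p \gg 0$ with $p^\top N \ll 0$.

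For \ref{ls:H1_i} $\Rightarrow$ \ref{ls:H1_ii}, I would show that the single pair $(\xi, p)$ furnished by~\ref{H1} serves as a Lyapunov certificate for the whole family. Indeed, for $0 \leq \Sigma \leq \Delta$, multiplying on the left by $B_1 \geq 0$ and on the right by $C_1 \geq 0$ gives $B_1 \Sigma C_1 \leq B_1 \Delta C_1$, and since $p \geq 0$ this yields $p^\top(A + B_1 \Sigma C_1) \leq p^\top(A + B_1 \Delta C_1) \leq -\xi p^\top \ll 0$. As $A + B_1 \Sigma C_1$ is Metzler, the characterisation above shows it is Hurwitz.

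The heart of the argument --- and the step I expect to be the main obstacle --- is the following claim, which I would isolate as the bridge between Hurwitz stability of $A + B_1 \Delta C_1$ and the spectral-radius condition: \emph{if $A$ is Hurwitz (hence invertible), then $A + B_1 \Delta C_1$ is Hurwitz if, and only if, $\rho(\bG_{11}(0)\Delta) < 1$}, where $\bG_{11}(0) = C_1(-A)^{-1}B_1 \geq 0$, so that $\bG_{11}(0)\Delta \geq 0$. To prove the claim I would introduce the homotopy $M_t := A + t B_1 \Delta C_1$, $t \in [0,1]$, each $M_t$ Metzler with $M_0 = A$ Hurwitz. For $t > 0$ the factorisation $\det(-M_t) = \det(-A)\,\det\!\big(I - t(-A)^{-1}B_1\Delta C_1\big)$ combined with Sylvester's determinant identity gives $\det(-M_t) = \det(-A)\,\det\!\big(I - t\,\bG_{11}(0)\Delta\big)$, so that $0 \in \operatorname{spec}(M_t)$ precisely when $t^{-1} \in \operatorname{spec}(\bG_{11}(0)\Delta)$. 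Writing $\lambda(\cdot)$ for the spectral abscissa, which for a Metzler matrix is a real (Perron) eigenvalue and is nondecreasing along $t \mapsto M_t$ by~\cite[Corollary 4.3.2, p.\ 60]{MR1319817}, I would argue by continuity that $\lambda(M_t) < 0$ for every $t$ strictly below the first value $t_0 > 0$ at which $0$ becomes an eigenvalue, while $\lambda(M_{t_0}) = 0$. Perron--Frobenius applied to the nonnegative matrix $\bG_{11}(0)\Delta$ identifies the threshold as $t_0 = 1/\rho(\bG_{11}(0)\Delta)$ (with $t_0 = \infty$ when $\rho = 0$). Evaluating at $t = 1$ then shows $A + B_1 \Delta C_1 = M_1$ is Hurwitz exactly when $1 < t_0$, that is, when $\rho(\bG_{11}(0)\Delta) < 1$.

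With the claim in hand the remaining implications are short. For \ref{ls:H1_ii} $\Rightarrow$ \ref{ls:H1_iii}, taking $\Sigma = 0$ gives that $A$ is Hurwitz and taking $\Sigma = \Delta$ gives that $A + B_1 \Delta C_1$ is Hurwitz, whence the claim yields $\rho(\bG_{11}(0)\Delta) < 1$. For \ref{ls:H1_iii} $\Rightarrow$ \ref{ls:H1_i}, the claim gives that the Metzler matrix $A + B_1 \Delta C_1$ is Hurwitz, so there is $p \gg 0$ with $p^\top(A + B_1 \Delta C_1) \ll 0$; setting $\xi := \min_j \big(-(p^\top(A+B_1\Delta C_1))_j / p_j\big) > 0$ produces $p^\top(A + B_1 \Delta C_1) \leq -\xi p^\top$, which is exactly~\ref{H1}. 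The delicate point, where I would concentrate the rigour, is the monotonicity-plus-continuity argument pinning $\lambda(M_t)$ relative to $t_0$; should that prove awkward, an alternative is to bound $|\bG_{11}(s)\Delta| \leq \bG_{11}(0)\Delta$ for $\operatorname{Re} s \geq 0$ using $\bG_{11}(s) = C_1\int_0^\infty \e^{-st}\e^{At}\,\rd t\,B_1$ with $\e^{At} \geq 0$, and then to note that any closed-right-half-plane zero of $\det(I - \bG_{11}(s)\Delta)$ forces $1 \leq \rho(\bG_{11}(s)\Delta) \leq \rho(|\bG_{11}(s)\Delta|) \leq \rho(\bG_{11}(0)\Delta)$.
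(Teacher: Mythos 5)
Your proposal is correct, but it takes a genuinely different route from the paper at the decisive step. The paper disposes of the equivalence \ref{ls:H1_i}~$\Leftrightarrow$~\ref{ls:H1_ii} by citation (hypothesis \ref{H1} is equivalent to $A+B_1\Delta C_1$ being Hurwitz by~\cite[Lemma 2.2, p.\ 31]{MR2655815}, and that in turn is equivalent to \ref{ls:H1_ii} by~\cite[Lemma 1]{drummond2022aizerman}), and then proves \ref{ls:H1_i}~$\Leftrightarrow$~\ref{ls:H1_iii} by two separate bespoke arguments: for \ref{ls:H1_i}~$\Rightarrow$~\ref{ls:H1_iii} it assumes $\rho(\bG_{11}(0)\Delta)\geq 1$, extracts a scaled $\Sigma \leq \Delta$ and a nonzero Perron vector $w \geq 0$ with $\bG_{11}(0)\Sigma w = w$, and reaches a contradiction by evaluating $p^\top(A+B_1\Sigma C_1)(-A)^{-1}B_1\Sigma w$ in two ways; for \ref{ls:H1_iii}~$\Rightarrow$~\ref{ls:H1_i} it runs a complex eigenvalue argument on the closed right half-plane using the bound $\lvert \bG_{11}(\lambda)\rvert \leq \bG_{11}(0)$ --- exactly the fallback you sketch in your last sentence. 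You instead prove \ref{ls:H1_i}~$\Rightarrow$~\ref{ls:H1_ii} directly (one Lyapunov vector $p$ certifies the whole family, which is essentially a self-contained proof of the cited Lemma~1 of~\cite{drummond2022aizerman}), and you replace both of the paper's directions by a single real-variable homotopy argument: the Sylvester identity $\det(-M_t)=\det(-A)\det(I-t\,\bG_{11}(0)\Delta)$, continuity and monotonicity of the spectral abscissa of the Metzler family $M_t$ (the same~\cite[Corollary 4.3.2]{MR1319817} the paper invokes elsewhere), and Perron--Frobenius to pin the singularity threshold at $t_0 = 1/\rho(\bG_{11}(0)\Delta)$. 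Your bridge argument is sound, including the degenerate case $\rho(\bG_{11}(0)\Delta)=0$, and the identification $\lambda(M_{t_0})=0$ via left-continuity plus the fact that the spectral abscissa of a Metzler matrix is an eigenvalue. What the two approaches buy: yours is fully self-contained, stays entirely in real Perron--Frobenius territory (no complex frequencies), and delivers both directions of the bridge at once, making transparent why $\rho = 1$ is the threshold; the paper's is shorter given its citations, and its right-half-plane bounding technique is the one that survives in settings where one genuinely needs spectral information at complex frequencies rather than only at $s=0$.
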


\begin{lemma}\label{lem:H2} 
Let~$A, B_i, C_i$ for~$i = 1,2$ be as in~\ref{A1},  fix~$\Delta \in \mR^{\tm_1 \times \tp_1}_+$, and let~$r \in \mR^{\tm_2}_+$,~$q \in \mR^{\tp_2}_+$ be given. Assume that $A+B_1 \Delta C_1$ is Hurwitz and 
\[ q^\top C_2( -(A+B_1 \Delta C_1))^{-1} \gg 0\,.\]
Hypothesis~\ref{H2} is equivalent to  $A+B_1 \Delta C_1 + \xi I$ is Hurwitz and
\[ r^\top - q^\top C_2( -\xi I - (A+B_1 \Delta C_1))^{-1} B_2 \geq 0\,.\]
The previous inequality is equivalent to
\[ r^\top - q^\top \Big(\bG_{22}(-\xi)  + \bG_{21}(-\xi)\big(I - \Delta \bG_{11}(-\xi)\big)^{-1} \Delta \bG_{12}(-\xi)\Big) \geq 0\,.\]
\end{lemma}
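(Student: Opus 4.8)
The plan is to recognise the two identities in~\eqref{eq:linear_dissipative} as precisely the Kalman--Yakubovich--Popov-type equations~\eqref{eq:KYP_equations} for an auxiliary positive linear control system, and then to invoke Lemma~\ref{lem:dissipation}. To this end, set $\widetilde A := A + B_1\Delta C_1$. Since $A$ is Metzler and $B_1\Delta C_1 \geq 0$ (as $B_1, \Delta, C_1 \geq 0$ by~\ref{A1}), the matrix $\widetilde A$ is again Metzler, so $(\widetilde A, B_2, C_2, 0)$ is a positive linear control system in the sense of Section~\ref{sec:linear}, with transfer function $z \mapsto C_2(zI - \widetilde A)^{-1}B_2$. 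With $D = 0$, the equations~\eqref{eq:linear_dissipative} coincide with~\eqref{eq:KYP_equations} for this system, and the standing hypothesis $q^\top C_2(-\widetilde A)^{-1} \gg 0$ is exactly~\eqref{eq:positive_observable}. The reverse implication is then immediate: assuming $\widetilde A + \xi I$ Hurwitz and $r^\top - q^\top C_2(-\xi I - \widetilde A)^{-1}B_2 \geq 0$, statement~\ref{ls:dissipation_ii} of Lemma~\ref{lem:dissipation} holds, so~\ref{ls:dissipation_i} furnishes nonnegative $p, l, k$ solving~\eqref{eq:linear_dissipative}, and the final claim of Lemma~\ref{lem:dissipation} upgrades $p$ to $p \gg 0$; this is precisely~\ref{H2}.

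The forward implication requires more care, because $\widetilde A + \xi I$ Hurwitz is a standing hypothesis of Lemma~\ref{lem:dissipation} and so must first be derived from~\ref{H2}. First I would read off from the first equation in~\eqref{eq:linear_dissipative} that $p^\top(\widetilde A + \xi I) = -(q^\top C_2 + l^\top) \leq 0$ with $p \gg 0$. Let $s$ denote the spectral abscissa of the Metzler matrix $\widetilde A + \xi I$ and let $w \geq 0$, $w \neq 0$, be an associated Perron eigenvector, so $(\widetilde A + \xi I)w = sw$. Pairing with $p$ gives $s\,(p^\top w) = p^\top(\widetilde A + \xi I)w \leq 0$, and since $p^\top w > 0$ we obtain $s \leq 0$. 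If $s = 0$, then $0 = p^\top(\widetilde A + \xi I)w = -(q^\top C_2 + l^\top)w$ forces $q^\top C_2 w = 0$, while $\widetilde A w = -\xi w$ yields $q^\top C_2(-\widetilde A)^{-1}w = \xi^{-1}q^\top C_2 w = 0$, contradicting~\eqref{eq:positive_observable} together with $w > 0$. Hence $s < 0$, i.e.\ $\widetilde A + \xi I$ is Hurwitz, and Lemma~\ref{lem:dissipation} (\ref{ls:dissipation_i}$\Rightarrow$\ref{ls:dissipation_ii}) then delivers the stated inequality.

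It remains to rewrite $q^\top C_2(-\xi I - \widetilde A)^{-1}B_2$ in terms of the blocks $\bG_{ij}(-\xi)$. First I would check the blocks are well defined: since $A + \xi I \leq \widetilde A + \xi I$ with both Metzler and the latter Hurwitz, monotonicity of the spectral abscissa for Metzler matrices (as used in the proof of Lemma~\ref{lem:dissipation}) gives that $A + \xi I$ is Hurwitz, so $-\xi I - A$ is invertible and each $\bG_{ij}(-\xi) = C_i(-\xi I - A)^{-1}B_j$ is defined. The determinant identity $\det(-\xi I - \widetilde A) = \det(-\xi I - A)\det(I - \Delta\bG_{11}(-\xi))$, obtained from $\det(X - UV) = \det(X)\det(I - VX^{-1}U)$ with $X = -\xi I - A$, $U = B_1$, $V = \Delta C_1$, then shows $I - \Delta\bG_{11}(-\xi)$ is invertible, using invertibility of $-\xi I - \widetilde A$ and $-\xi I - A$. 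A Woodbury inversion identity applied to $-\xi I - \widetilde A = (-\xi I - A) - B_1\Delta C_1$, followed by the push-through identity $\Delta(I - \bG_{11}(-\xi)\Delta)^{-1} = (I - \Delta\bG_{11}(-\xi))^{-1}\Delta$, yields
\begin{equation*}
 C_2(-\xi I - \widetilde A)^{-1}B_2 = \bG_{22}(-\xi) + \bG_{21}(-\xi)\big(I - \Delta\bG_{11}(-\xi)\big)^{-1}\Delta\bG_{12}(-\xi),
\end{equation*}
and left-multiplication by $q^\top$ gives the claimed equivalence of the two inequalities.

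The main obstacle I anticipate is the forward direction, specifically establishing that $\widetilde A + \xi I$ is Hurwitz from~\ref{H2} alone (since only $\widetilde A$, not $\widetilde A + \xi I$, is assumed Hurwitz); the Perron--Frobenius argument above, leaning crucially on the positive observability hypothesis~\eqref{eq:positive_observable}, is the device that resolves this. The remaining effort is routine linear-algebra bookkeeping, chiefly ensuring the matrices inverted in the Woodbury and push-through steps are genuinely invertible, which the spectral-abscissa monotonicity and the determinant identity supply.
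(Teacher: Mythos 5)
Your proposal is correct and follows essentially the same route as the paper: recognise~\eqref{eq:linear_dissipative} as the KYP-type equations~\eqref{eq:KYP_equations} for the auxiliary positive system $(A+B_1\Delta C_1, B_2, C_2, 0)$, apply Lemma~\ref{lem:dissipation} (with the standing hypothesis $q^\top C_2(-(A+B_1\Delta C_1))^{-1} \gg 0$ playing the role of~\eqref{eq:positive_observable} and upgrading $p$ to $p \gg 0$), and use the Sherman--Morrison--Woodbury identity to pass to the transfer-function form. One point is worth flagging: the paper's own proof is a two-line appeal to Lemma~\ref{lem:dissipation} and never explicitly derives, in the forward direction, that $A + B_1\Delta C_1 + \xi I$ is Hurwitz from~\ref{H2} --- which is needed before Lemma~\ref{lem:dissipation} can be invoked, since that is one of its standing hypotheses, and nonnegativity of $p^\top(A+B_1\Delta C_1+\xi I) = -(q^\top C_2 + l^\top)$ alone only yields spectral abscissa $\leq 0$. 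Your Perron--Frobenius argument (pairing a Perron eigenvector $w$ of the Metzler matrix $A+B_1\Delta C_1+\xi I$ against $p \gg 0$, then ruling out abscissa zero via the positive-observability hypothesis) correctly fills this detail, so in this respect your write-up is more complete than the paper's. A cosmetic remark: with the Woodbury form you chose (taking $V = \Delta C_1$), the middle factor already reads $\big(I - \Delta\bG_{11}(-\xi)\big)^{-1}$ with the remaining $\Delta$ to its right, so the push-through identity you cite is not actually needed.
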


By way of commentary, statement~\ref{ls:H1_ii} of Lemma~\ref{lem:H1} is the key stability hypothesis in the real Aizerman conjecture for positive Lur'e systems, as considered in~\cite{drummond2022aizerman}. Of course, in the single-input single-output setting, the condition~$\rho(\bG_{11}(0)\Delta)<1$ reduces to the small-gain inequality~$\bG_{11}(0)\Delta <1$. Although statement~\ref{ls:H1_iii} of Lemma~\ref{lem:H1} essentially replaces one eigenvalue condition, namely on the eigenvalues of~$A + B_1 \Delta C_1$, with another eigenvalue condition~$\rho(\Delta\bG(0)) = \rho(\bG_{11}(0)\Delta)<1$, in usual applications~$\tm_1, \tp_1 \ll \tn$ so that at least one of the second eigenvalue problems is smaller than the first.

It is clear that hypothesis~\ref{H2} implies~\ref{H1}. Assumption~\ref{H2} is a so-called linear dissipativity hypothesis, and equalities of the form~\eqref{eq:linear_dissipative} are the focus of Lemma~\ref{lem:dissipation}.  Whilst the current formulation allows for~$q,r,k,l$ to all equal zero, in practice we shall desire~$r$ and~$q$ in~\eqref{eq:linear_dissipative} to be non-zero. Hypothesis~\ref{H2} may be stated with the slack variables~$k$ and~$l$ omitted by replacing the equations in~\eqref{eq:linear_dissipative} with
 \[         
   p^\top (A+B_1\Delta C_1) +\xi p^\top + q^\top C_2 \leq 0 \quad \text{and} \quad p^\top  B_2 -r^\top  \leq 0\,.
\]
\section{Incremental stability}\label{sec:incremental_stability}

%
%
The following theorem is the main result of this work, and shows that hypotheses~\ref{H1} and~\ref{H2} are, respectively, sufficient for the Lur'e system~\eqref{eq:lure_4b} to admit two incremental bounds which we present. A number of subsequent stability and convergence properties are underpinned by this result. 
\begin{theorem}\label{thm:incremental}
Consider the four-block Lur'e system~\eqref{eq:lure_4b}, and let~$(w_a,x_a,y_a)$ and~$(w_b,x_b,y_b)$ denote two input/state/output trajectories of~\eqref{eq:lure_4b}. If~\ref{H1} is satisfied, then the estimate
\begin{equation}\label{eq:incremental_main_H1}
    |(x_{a}-x_{b})(t_1)|_p \leq \e^{-\xi (t_1 -t_0)}|(x_{a}-x_{b})(t_0)|_p +  \e^{-\xi t_1}\| w_{a}-w_{b} \|_{L^1_\xi (t_0,t_1 \mc B_2^\top p)} \quad\forall \: t_1 \geq t_0\geq 0\,,
\end{equation} 
holds. If~\ref{H2} is satisfied, then the following estimate holds:
\begin{align}
     &\int_{t_0}^{t_1} \e^{\xi t} \Big(|(y_{2,a}-y_{2,b})(t)|_q +|(w_{a}-w_{b})(t)|_k + |(x_{a}-x_{b})(t)|_l \Big) \,\rd t  + \e^{\xi t_1}|(x_{a}-x_{b})(t_1)|_p \notag \\
     & \quad \leq \e^{\xi t_0}|(x_{a}-x_{b})(t_0)|_p +  \int_{t_0}^{t_1} \e^{\xi t }|(w_{a}-w_{b})(t)|_r \,\rd t  \quad\forall \: t_1 \geq t_0 \geq 0\,. \label{eq:incremental_main_H2}
\end{align}
\end{theorem}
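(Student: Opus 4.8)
The plan is to pass to the difference variables $\tilde x := x_a - x_b$, $\tilde w := w_a - w_b$ and $\tilde y_2 := y_{2,a} - y_{2,b} = C_2 \tilde x$, and to monitor the weighted semi-norm $t \mapsto |\tilde x(t)|_p = p^\top |\tilde x(t)|$. Subtracting the two copies of the state equation in~\eqref{eq:lure_4b} gives, for almost every $t$,
\[ \dot{\tilde x} = A\tilde x + B_1 \tilde f + B_2 \tilde w, \qquad \tilde f(t) := f(t, C_1 x_a(t)) - f(t, C_1 x_b(t)), \]
and hypothesis~\ref{A2} together with $C_1 \geq 0$ furnishes the pointwise componentwise bound $|\tilde f| \leq \Delta |C_1 \tilde x| \leq \Delta C_1 |\tilde x|$. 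Since $\tilde x$ is locally absolutely continuous, each $|\tilde x_i|$ is too, with $\frac{\rd}{\rd t}|\tilde x_i| = \sign(\tilde x_i)\dot{\tilde x}_i$ for a.e.\ $t$ (adopting $\sign(0) = 0$ and using that $\dot{\tilde x}_i = 0$ a.e.\ on $\{\tilde x_i = 0\}$); summing against the weights $p_i \geq 0$ yields $\frac{\rd}{\rd t}|\tilde x|_p = \sum_i p_i \sign(\tilde x_i)\dot{\tilde x}_i$ a.e.

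The heart of the argument is a triple of ``one-norm dissipation'' estimates exploiting the positivity structure. Splitting $A$ into its diagonal and (nonnegative, by the Metzler property) off-diagonal parts and using $\sign(\tilde x_i)\tilde x_i = |\tilde x_i|$ and $\sign(\tilde x_i)\tilde x_j \leq |\tilde x_j|$ gives $\sum_i p_i \sign(\tilde x_i)(A\tilde x)_i \leq p^\top A|\tilde x|$. Bounding $|\sign(\tilde x_i)| \leq 1$ and invoking $B_1, B_2, p \geq 0$ with the incremental bound on $\tilde f$ gives $\sum_i p_i \sign(\tilde x_i)(B_1\tilde f)_i \leq p^\top B_1 \Delta C_1 |\tilde x|$ and $\sum_i p_i \sign(\tilde x_i)(B_2\tilde w)_i \leq p^\top B_2 |\tilde w| = |\tilde w|_{B_2^\top p}$. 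Summing these produces the master inequality
\[ \frac{\rd}{\rd t}|\tilde x|_p \leq p^\top(A + B_1\Delta C_1)|\tilde x| + p^\top B_2 |\tilde w| \quad \text{a.e.} \]

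Both estimates then follow by substituting the relevant hypothesis and applying an integrating factor. Under~\ref{H1}, the inequality $p^\top(A + B_1\Delta C_1) \leq -\xi p^\top$ and $|\tilde x| \geq 0$ give $\frac{\rd}{\rd t}|\tilde x|_p \leq -\xi|\tilde x|_p + |\tilde w|_{B_2^\top p}$; multiplying by $\e^{\xi t}$, recognising $\frac{\rd}{\rd t}(\e^{\xi t}|\tilde x|_p)$ on the left, integrating over $[t_0,t_1]$ and identifying the forcing integral via~\eqref{eq:Lp_norms_weighted} yields~\eqref{eq:incremental_main_H1}. Under~\ref{H2}, substituting $p^\top(A+B_1\Delta C_1) = -\xi p^\top - q^\top C_2 - l^\top$ and $p^\top B_2 = r^\top - k^\top$, and using $q^\top C_2 |\tilde x| \geq q^\top |C_2 \tilde x| = |\tilde y_2|_q$ (as $C_2, q \geq 0$), rearranges the master inequality into $\frac{\rd}{\rd t}(\e^{\xi t}|\tilde x|_p) + \e^{\xi t}(|\tilde y_2|_q + |\tilde x|_l + |\tilde w|_k) \leq \e^{\xi t}|\tilde w|_r$; integrating over $[t_0,t_1]$ delivers~\eqref{eq:incremental_main_H2}.

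The step demanding the most care is the differentiation of $|\tilde x|_p$ and the attendant sign bookkeeping: one must justify the a.e.\ chain rule for $|\tilde x_i|$, dispose of the measure-zero set $\{\tilde x_i = 0\}$, and keep inequality directions consistent as the $\sign$ factors are absorbed into the Metzler and nonnegativity structure. Everything downstream is a routine integrating-factor computation, with the forcing integrals matching the weighted $L^1_\xi$ (semi-)norms directly from the definitions in~\eqref{eq:Lp_norms_weighted}.
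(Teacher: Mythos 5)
Your proof is correct, and at the structural level it matches the paper: both arguments establish the a.e.\ differential inequality
\begin{equation*}
\frac{\rd}{\rd t}\,\lvert \tilde x(t)\rvert_p \;\leq\; p^\top\big(A+B_1\Delta C_1\big)\lvert \tilde x(t)\rvert + p^\top B_2 \lvert \tilde w(t)\rvert\,, \qquad \tilde x := x_a - x_b,\ \tilde w := w_a - w_b\,,
\end{equation*}
and then run the identical integrating-factor computation under~\ref{H1} or~\ref{H2}. Where you genuinely differ is in how this master inequality is derived. The paper never differentiates the absolute value pointwise: it writes $(x_a-x_b)(t+h)$ via the variation-of-parameters formula~\eqref{eq:vop}, estimates using the nonnegativity of $\e^{Ah}$ (fact~\ref{ls:F1} for Metzler $A$) together with $p, B_i, C_1, \Delta \geq 0$ and~\eqref{eq:f_incremental}, and passes to the limit in the forward difference quotient at a point of differentiability of $\lvert x_a - x_b\rvert_p$. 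You instead differentiate each $\lvert \tilde x_i\rvert$ directly via the a.e.\ chain rule for absolutely continuous functions (invoking the level-set fact that $\dot{\tilde x}_i = 0$ a.e.\ on $\{\tilde x_i = 0\}$) and absorb the $\sign$ factors using the Metzler sign pattern of $A$ (diagonal/off-diagonal split) and the nonnegativity of $B_1$, $B_2$, $p$; your sign bookkeeping is correct as written. The two routes rest on equivalent positivity facts --- $\e^{Ah} \geq 0$ for all $h \geq 0$ is precisely the characterisation of Metzler matrices --- so neither is circular, and each buys something: yours is more elementary and self-contained (no variation-of-parameters formula, no matrix-exponential limits~\eqref{eq:matrix_exp_limit}), while the paper's mild-solution argument sidesteps the pointwise differentiation of $\lvert\cdot\rvert$ and the level-set lemma entirely, which makes it more robust in settings where such pointwise arguments are delicate (for instance, the infinite-dimensional Lur'e systems of the works it builds on). Everything downstream of the master inequality in your proposal coincides with the paper's proof.
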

Here~$y_{2,k} := C_2 x_k$ for~$k = a,b$. We enumerate input/state/output trajectories of~\eqref{eq:lure_4b} using letters to reduce potential confusion between pairs of trajectories and the segments~$y_{1}$ and~$y_2$ of the output of~\eqref{eq:lure_4b}.
Observe that if~\ref{H2} holds, then the estimate~\eqref{eq:incremental_main_H1} holds with~$B_2^\top p$ replaced by~$r$, as an immediate consequence of~\eqref{eq:incremental_main_H2}. We shall use this fact in certain later arguments.

\begin{proof}[Proof of Theorem~\ref{thm:incremental}]
The derivations of~\eqref{eq:incremental_main_H1} and~\eqref{eq:incremental_main_H2} share the same start. Let~$(w_a,x_a,y_a)$ and~$(w_b,x_b,y_b)$ denote two input/state/output trajectories of~\eqref{eq:lure_4b}. The functions~$x_a$ and~$x_b$ are locally absolutely continuous, and hence so is~$t \mapsto |x_{a}(t)-x_{b}(t)|_p$ as the composition of a locally absolutely continuous function and a globally Lipschitz one. Therefore,~$t \mapsto |x_{a}(t)-x_{b}(t)|_p$ is differentiable almost everywhere on~$\mR_+$ and the fundamental theorem of Lebesgue integral calculus holds; see, for example~\cite[Theorem 3.20, p.\ 77]{leoni2017first}. In particular, for all~$0 \leq t_0 < t_1$,
\begin{align}
     \Big[ \e^{\xi t}|(x_{a}-x_{b})(t)|_p \Big]_{t_0}^{t_1}  &=   \int_{t_0}^{t_1} \dfrac{\rd }{\rd t} \left(\e^{\xi t}|(x_{a}-x_{b})(t)|_p \right ) \, \rd t \notag \\
     &=  \int_{t_0}^{t_1} \e^{\xi t} \left(\dfrac{\rd }{\rd t}|(x_{a}-x_{b})(t)|_p + \xi|(x_{a}-x_{b})(t)|_p\right)\, \rd t\,. \label{eq:incremental_p1}
\end{align}
 We proceed to derive an estimate for the derivative of~$|x_{a}-x_{b}|_p$ which appears in~\eqref{eq:incremental_p1}. For which purpose, let~$t \in (t_0,t_1)$ be a point of differentiability for~$|x_{a}-x_{b}|_p$ (and so almost every~$t \in (t_0,t_1)$ is suitable).  The variation of parameters formula~\eqref{eq:vop} yields
\begin{align}\label{egalite trajectoire}
    (x_{a}-x_{b})(t+h)  & =  \e^{Ah}(x_{a}-x_{b})(t)+ \int_{t}^{t+h}\e^{A(t+h-\theta)}B_1\left(f(C_1x_a)-f(C_1x_b)\right)\,\rd \theta \notag \\
    & \qquad  +\int_{t}^{t+h}\e^{A(t+h-\theta)}B_2 (w_a-w_b)\,\rd \theta \quad \forall \: h > 0 \,,
\end{align}
where we have suppressed the first variable of~$f$ and the argument~$\theta$ in the integrand for clarity. 
For brevity, set~$z : = x_a - x_b$. Routine estimates of~\eqref{egalite trajectoire} yield that, for all~$h >0$,
\begin{align*}
    |z(t+h)|_p &\leq  p^\top \e^{Ah}\left|z(t)\right|
  +p^\top   \int_{t}^{t+h}\e^{A(t+h-\theta)} \big(B_1\left|f(C_1x_a)-f(C_1x_b)\right| + B_2 |w_a-w_b| \big)\,\rd \theta\,.
\end{align*}
From an application of the incremental bound~\eqref{eq:f_incremental} and the nonnegativity of~$C_1$, we estimate that
\begin{align*}
|z(t+h)|_p &\leq  p^\top \e^{Ah}\left|z(t)\right|  +p^\top   \int_{t}^{t+h}\e^{A(t+h-\theta)} \big(B_1 \Delta C_1 \left|z\right| + B_2 |w_a-w_b| \big)\,\rd \theta
\end{align*}
again, for all~$h>0$. Therefore, 
\begin{align}
    \dfrac{|z(t+h)|_p-|z(t)|_p}{h} & \leq  p^\top \left(\dfrac{\e^{Ah}-I}{h}\right)\left|z(t)\right| \notag\\
    & \quad + p^\top  \int_{t}^{t+h}\e^{A(t+h-\theta)} \big(B_1 \Delta C_1 \left|z\right| + B_2 |w_a-w_b| \big)\,\rd \theta \quad \forall \: h >0\,.
    \label{eq:incremental_p2}
\end{align}
We take the limit~$h \searrow 0$ in~\eqref{eq:incremental_p2}. For which purpose, we record that
\begin{equation}\label{eq:matrix_exp_limit}
    \dfrac{\e^{Ah}-I}{h} \rightarrow A \quad \text{ as }\quad  h \searrow 0,
\end{equation}
and 
\begin{align}
    \e^{A(t+h)}\dfrac{1}{h}  \int_{t}^{t+h}\e^{-A\theta}\left(B_1\Delta C_1\left|(x_a-x_b)(\theta)\right|+B_2\left|(w_a-w_b)(\theta)\right|\right)\,\rd \theta \notag \\
    \rightarrow B_1\Delta C_1\left|z(t)\right|+B_2\left|(w_a-w_b)(t)\right| \quad \text{ as }\quad  h \searrow 0\,, \label{eq:limit_integral_h}
\end{align}
by the Fundamental Theorem of Calculus. Substituting~\eqref{eq:matrix_exp_limit} and~\eqref{eq:limit_integral_h} into~\eqref{eq:incremental_p2}, and letting~$h \searrow 0$, gives that
\begin{equation}\label{eq:incremental_p3}
    \diff{}{t}|x_{a}(t)-x_{b}(t)|_p \leq p^\top(A+B_1\Delta C_1)\left|(x_a-x_b)(t)\right|+p^\top B_2\left|(w_a-w_b)(t)\right|\,. 
\end{equation}
Since~$t \in (t_0,t_1)$ was arbitrary, we conclude that the estimate~\eqref{eq:incremental_p3} holds for almost all~$s \in (t_0, t_1)$.
Consequently, the conjunction of~\eqref{eq:incremental_p1} and~\eqref{eq:incremental_p3} yields 
\begin{align}
    \e^{\xi t_1}|(x_{a}-x_{b})(t_1)|_p - \e^{\xi t_0}|(x_{a}-x_{b})(t_0)|_p  &\leq    \int_{t_0}^{t_1} \e^{\xi t} (p^\top (A+B_1\Delta C_1)+\xi p^\top)|(x_{a}-x_{b})(t)| \, \rd t \notag \\
    & \quad +   \int_{t_0}^{t_1} \e^{\xi t} p^\top B_2|(w_{a}-w_{b})(t)| \, \rd t\,. 
    \label{eq:incremental_p_pause}
\end{align}
If hypothesis~\ref{H1} holds, then the inequality~\eqref{eq:incremental_main_H1} follows directly from~\eqref{eq:incremental_p_pause}.
If instead hypothesis~\ref{H2} holds, then we invoke this in~\eqref{eq:incremental_p_pause} to obtain
\begin{align}
    \e^{\xi t_1}|(x_{a}-x_{b})(t_1)|_p - \e^{\xi t_0}|(x_{a}-x_{b})(t_0)|_p  &\leq    \int_{t_0}^{t_1} \e^{\xi t} (-l^\top-q^\top C_2)|(x_{a}-x_{b})(t)| \, \rd t \notag \\
    & \quad +   \int_{t_0}^{t_1} \e^{\xi t} (r^\top-k^\top)|(w_{a}-w_{b})(t)| \, \rd t\,. \label{eq:incremental_p4}
\end{align}
Therefore, inequality~\eqref{eq:incremental_p4}, when combined with 
\[
    q^\top|(y_{2,a}-y_{2,b})(t)| = q^\top|C_2(x_{a}-x_{b})(t))|\leq q^\top C_2|(x_{a}-x_{b})(t)|\quad \forall \: t\geq 0\,,
\]
establishes inequality~\eqref{eq:incremental_main_H2}, as required. \qedhere
\end{proof}

%
%
The next result is our first consequence of Theorem~\ref{thm:incremental} and is an exponential incremental stability estimate for the four-block Lur'e system~\eqref{eq:lure_4b}. 
\begin{corollary}\label{cor:incremental_stability}
Consider the forced Lur'e system~\eqref{eq:lure_4b}, and let~$(w_a,x_a,y_a)$ and~$(w_b,x_b,y_b)$ denote two input/state/output trajectories of~\eqref{eq:lure_4b}. Let~$1 \leq s \leq \infty$. If~\ref{H1} is satisfied, then there exists~$\alpha_s >0$ (independent of the trajectories) such that for all~$t_1 \geq t_0 \geq 0$:
\begin{equation}\label{eq:incremental_corollary_H1}
    |(x_{a}-x_{b})(t_1)|_p \leq \e^{-\xi (t_1-t_0)}|(x_{a}-x_{b})(t_0)|_p + \alpha_s \| w_{a}-w_{b} \|_{L^s(t_0,t_1 \mc B_2^\top p)}\,. 
\end{equation}
If~\ref{H2} is satisfied, then for all~$t_1 \geq t_0 \geq 0$:
\begin{align}
\| y_{2,a} - y_{2,b} \|_{L^1_\xi(t_0,t_1 \mc q)} + \| w_a - w_b \|_{L^1_\xi(t_0,t_1 \mc k)} + \| x_{a} - x_{b} \|_{L^1_\xi(t_0,t_1 \mc l)} + \e^{\xi t_1}|(x_{a}-x_{b})(t_1)|_p \notag \\ 
\leq \e^{\xi t_0}|(x_{a}-x_{b})(t_0)|_p + \| w_{a} - w_{b} \|_{L^1_\xi(t_0,t_1 \mc r)} 
\,. \label{eq:incremental_corollary_H2}
\end{align}
\end{corollary}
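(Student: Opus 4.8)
The plan is to derive both estimates directly from Theorem~\ref{thm:incremental}, treating the two hypotheses separately. Under~\ref{H2} the estimate~\eqref{eq:incremental_corollary_H2} is nothing more than~\eqref{eq:incremental_main_H2} rewritten in compact norm notation: unwinding the definition~\eqref{eq:Lp_norms_weighted} of the weighted $L^1_\xi$ seminorms gives, for example,
\[ \int_{t_0}^{t_1} \e^{\xi t} |(y_{2,a}-y_{2,b})(t)|_q \, \rd t = \| y_{2,a} - y_{2,b} \|_{L^1_\xi(t_0,t_1 \mc q)}\,, \]
and likewise for the $k$-, $l$- and $r$-weighted terms appearing in~\eqref{eq:incremental_main_H2}, so that~\eqref{eq:incremental_corollary_H2} follows with no further computation.

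Under~\ref{H1} I would start from~\eqref{eq:incremental_main_H1}, the only task being to replace the weighted $L^1_\xi$-norm of the forcing difference by an arbitrary weighted $L^s$-norm at the cost of a constant. Writing $v := B_2^\top p \in \mR^{\tm_2}_+$ and $g := |(w_a-w_b)|_v$, the forcing term in~\eqref{eq:incremental_main_H1} is, by definition of the weighted norms,
\[ \e^{-\xi t_1} \| w_a - w_b \|_{L^1_\xi(t_0,t_1 \mc v)} = \int_{t_0}^{t_1} \e^{-\xi(t_1 - t)} g(t) \, \rd t\,, \]
the convolution of $g$ against a decaying exponential kernel, which I must bound by $\alpha_s \| w_a - w_b \|_{L^s(t_0,t_1 \mc v)}$.

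The key step, for $1 < s < \infty$, is to apply H\"older's inequality with the conjugate exponent $s'$ (satisfying $1/s + 1/s' = 1$), which separates off the kernel:
\[ \int_{t_0}^{t_1} \e^{-\xi(t_1 - t)} g(t) \, \rd t \leq \Big( \int_{t_0}^{t_1} \e^{-\xi s'(t_1 - t)} \, \rd t \Big)^{1/s'} \Big( \int_{t_0}^{t_1} g(t)^s \, \rd t \Big)^{1/s}\,. \]
Here the second factor equals $\| w_a - w_b \|_{L^s(t_0,t_1 \mc v)}$ by~\eqref{eq:Lp_norms_weighted}, and the first is controlled uniformly in $t_0,t_1$: with the substitution $u = t_1 - t$,
\[ \int_{t_0}^{t_1} \e^{-\xi s'(t_1 - t)} \, \rd t = \int_0^{t_1 - t_0} \e^{-\xi s' u} \, \rd u \leq \int_0^\infty \e^{-\xi s' u} \, \rd u = \frac{1}{\xi s'}\,, \]
so one may take $\alpha_s := (\xi s')^{-1/s'}$, which is finite and independent of the trajectories and of $t_0,t_1$. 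The two endpoints are handled directly: for $s=1$ one uses $\e^{-\xi(t_1-t)} \leq 1$ to obtain $\alpha_1 = 1$, and for $s=\infty$ one pulls $g$ out in $L^\infty$ and integrates the kernel to obtain $\alpha_\infty = 1/\xi$.

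I do not anticipate a serious obstacle; the argument is essentially an application of H\"older's inequality. The only point requiring genuine care — and the reason $\alpha_s$ can be taken independent of the time window $(t_0,t_1)$ — is the extension of the finite kernel integral $\int_0^{t_1-t_0}$ to the convergent improper integral $\int_0^\infty$, which is precisely where the strict inequality $\xi > 0$ from~\ref{H1} is used. This uniformity is what makes the estimate useful in the later analysis over arbitrary intervals.
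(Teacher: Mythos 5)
Your proposal is correct and follows essentially the same route as the paper: the \ref{H2} estimate is read off as a rewriting of~\eqref{eq:incremental_main_H2} in weighted $L^1_\xi$-norm notation, and the \ref{H1} estimate is obtained from~\eqref{eq:incremental_main_H1} via H\"older's inequality with the conjugate exponent applied to the exponential kernel, with the trivial endpoint constants $\alpha_1 = 1$ and $\alpha_\infty = 1/\xi$. Your constant $(\xi s')^{-1/s'}$ coincides exactly with the paper's $(s_0\xi)^{-1/s_0}$; the only cosmetic difference is that you substitute $u = t_1 - t$ and extend to $\int_0^\infty$, whereas the paper bounds $\int_{t_0}^{t_1}\e^{s_0\xi t}\,\rd t$ by $\e^{s_0\xi t_1}/(s_0\xi)$ directly, which is the same estimate.
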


%
%
The inequality~\eqref{eq:incremental_corollary_H1} with~$s = \infty$ is an incremental exponential input-to-state-stability estimate for~\eqref{eq:lure_4b}, whilst estimate~\eqref{eq:incremental_corollary_H2} entails incremental (weighted)~$L^1$-input/output stability of system~\eqref{eq:lure_4b} with a linear (unity) gain,  that is,
\[ \| y_{2,a} - y_{2,b} \|_{L^1_\xi(t_0,t_1 \mc q)} \leq \| w_{a} - w_{b} \|_{L^1_\xi(t_0,t_1 \mc r)}\,.\]
Observe that $r$ and $q$ in hypothesis~\ref{H2} appear as weights in the above estimate, and that $q \gg0$ is required for the left-hand side of the above to contain a norm of $y_{2,a} - y_{2,b}$. (The right-hand side may always be bounded above by some norm in the case that~$r \not \gg0$.) The conclusions of Corollary~\ref{cor:incremental_stability} illustrate the different consequences of hypotheses~\ref{H1} and~\ref{H2}. Both of these hypotheses contain an ``internal stability'' component --- namely that~$A + B_1 \Delta C_1$ is Hurwitz. In the context of positive Lur'e systems, this assumption is necessary and sufficient for Lyapunov stability notions (see, for example~\cite{drummond2022aizerman}). Hypothesis~\ref{H2} ensures the incremental input/output estimates~\eqref{eq:incremental_corollary_H2} and that above from~$w_a - w_b$ to~$y_{2,a}-y_{2,b}$ in terms of the linear data arising in~\eqref{eq:linear_dissipative}. Hypothesis~\ref{H1} by itself does seemingly not entail any such input-output estimate, other than by directly estimating~$y_{2,a} - y_{2,b} = C_2 (x_a - x_b)$ and invoking~\eqref{eq:incremental_corollary_H1}. 

%
%
\begin{proof}[Proof of Corollary~\ref{cor:incremental_stability}]
Inequality~\eqref{eq:incremental_corollary_H1} follows from~\eqref{eq:incremental_main_H1}, the cases~$s=1$ and~$s=\infty$ being trivial, indeed,~$\alpha_1 = 1$ and~$\alpha_\infty = 1/\xi$. For notational simplicity set $\tilde r: = B_2^\top p$.  For finite~$s \in (1,\infty)$, let~$1 < s_0 < \infty$ be conjugate to~$s$, that is,~$1/s + 1/s_0 = 1$. Then, for all~$t_1 \geq t_0 \geq 0$, an application of H\"{o}lder's inequality yields
\begin{align*}
    \e^{-\xi t_1}   \int_{t_0}^{t_1} \e^{\xi t}|(w_{a}-w_{b})(t)|_{\tilde r} \,\rd t & \leq  \e^{-\xi t_1}  \Big(\int_{t_0}^{t_1} \e^{s_0\xi t} \, \rd t\Big)^{\frac{1}{s_0}}  \Big(\int_{t_0}^{t_1} \lvert (w_a - w_b)(t) \rvert_{\tilde r}^s \, \rd t\Big)^{\frac{1}{s}} \\
    & \leq \e^{-\xi t_1} \Big( \frac{\e^{s_0 \xi t_1}}{s_0 \xi}\Big)^{\frac{1}{s_0}} \| w_{a}-w_{b} \|_{L^s(t_0,t_1\mc \tilde r)} = 
    (s_0\xi)^{-\frac{1}{s_0}} \| w_{a}-w_{b} \|_{L^s(t_0,t_1\mc \tilde r)}\,,
\end{align*}
that is, the desired second term on the right-hand side of~\eqref{eq:incremental_corollary_H1}.

The inequality~\eqref{eq:incremental_corollary_H2} is simply a rewriting of~\eqref{eq:incremental_main_H2} in terms of weighted~$L^1$-norms, and follows immediately from Theorem~\ref{thm:incremental}.
\end{proof}

We conclude this section with commentary on variations of our results.
%
%
\begin{remark}
(i) Let~$K \in \mR^{\tm_1 \times \tp_1}$. The so-called loop shifted Lur'e system associated with~\eqref{eq:lure_4b} is given by
\[
\begin{aligned} 
        \dot{x}(t) &= (A+B_1 K C_1) x(t)+B_1\big(f(t,C_1x(t)) - K C_1 x(t)\big)+B_2w(t),\\
         y(t)&= 
         \bpm{
            y_1(t)\\ y_2(t)
        }
        = 
        \bpm{
            C_1\\ C_2
        }
        x(t)\,,
\end{aligned}
\]
 which is itself an instance of~\eqref{eq:lure_4b} with~$A$ and~$f$ replaced by~$A_K : = A + B_1K C_1$ and~$f_K(t,z) : = f(t,z) - K z$, respectively. The present results are applicable in this setting when~$A_K$ and~$f_K$ satisfy assumptions~\ref{A1} and~\ref{A2}.

(ii) Consider next the following Lur'e system with output disturbance, that is, where the differential equation in~\eqref{eq:lure_4b} is replaced by
\[ \dot{x}(t) = Ax(t)+B_1f(t,C_1x(t) + w_1(t) )+B_2w_2(t)\,. \]
By introducing an error term, we rewrite the above as the differential equation in~\eqref{eq:lure_4b} with~$B_2 w$ there replaced by
\[ \tilde w (t) :=  B_2w_2(t) + B_1\big(f(t,C_1x(t) + w_1(t) ) - f(t,C_1x(t))\big)\,, \]
and apply the current results to the trajectory~$(\tilde w, x)$ of~\eqref{eq:lure_4b} with $B_2$ replaced by $I$. In light of the incremental sector condition in~\ref{A2}, observe that the componentwise absolute value~$\lvert \tilde w(t) \rvert$ may be bounded from above in terms of~$\lvert w_1(t) \rvert$ and~$\lvert w_2(t) \rvert$, independently of~$x$.

(iii) Now suppose that hypothesis~\ref{A2} is replaced by, for given non-empty subsets $Z_1, Z_2 \subseteq \mR^{\tp_1}$,
\begin{enumerate}[label = {\bfseries (A\arabic*)'}, ref = {\rm (A\arabic*)'}, start = 2]
    \item \label{A2'} The  function~$f$ satisfies:
    \begin{itemize}
        \item there exist $a_1, a_2 >0$ such that
        \[ \esssup_{t \geq 0}\| f(t,\zeta)\| \leq a_1 \| \zeta \| + a_2 \quad \forall \: \zeta \in \mR^{\tp_1}\,;\]
\item there exists~$\Delta\in\mR_+^{\tm_1\times \tp_1}$ such that 
    \begin{equation}\label{eq:f_incremental_Y}
       \esssup_{t \geq 0} |f(t,\zeta_1)-f(t,\zeta_2)|\leq \Delta|\zeta_1-\zeta_2|\quad\forall \: y_1 \in Z_1, \; \forall \: \zeta_2\in Z_2\,.
    \end{equation}
    \end{itemize}
\end{enumerate}
The first item in hypothesis~\ref{A2'} is imposed to ensure existence of trajectories of~\eqref{eq:lure_4b} (in particular, existence of state variables defined on all of $\mR_+$) and could be replaced by other suitable conditions. Note that $\Delta$ in hypothesis~\ref{A2'} may, in general, depend on the subsets $Z_1, Z_2 \subseteq \mR^{\tp_1}$.

Then, the conclusions of Theorem~\ref{thm:incremental} and Corollary~\ref{cor:incremental_stability} remain valid, only now for input/state/output trajectories~$(w_a,x_a,y_a)$ and~$(w_b,x_b,y_b)$ with $y_a \in Z_1$ and $y_b \in Z_2$ (inclusions understood pointwise almost everywhere). The only changes to the proofs are that references to~\eqref{eq:f_incremental} are replaced by references to~\eqref{eq:f_incremental_Y}. Observe that Theorem~\ref{thm:incremental} and Corollary~\ref{cor:incremental_stability} are recovered when $Z_1 = Z_2 = \mR^{\tp_1}$. One special case of interest is when $Z_2$ is a singleton, and $(w_b,x_b,y_b)$ is equal to a constant trajectory with $y_b \in Z_2$. In summary, on the one hand, the inequality~\eqref{eq:f_incremental_Y} is much less restrictive than~\eqref{eq:f_incremental}, and permits ``localized'' versions of our results. On the other hand, the results are only valid in the somewhat restrictive case that $y_a \in Z_1$ and $y_b \in Z_2$, which will in general require prior knowledge of the input/state/output trajectories to verify.
\end{remark}

%
%
\section{Consequences of incremental stability}\label{sec:consequences}

In the following two subsections we collect consequences of the incremental stability results of Section~\ref{sec:incremental_stability}. Unless otherwise specified, here we assume that
\[ f(t,z) = f(z) \quad \text{that is,~$f$ does not depend explicitly on~$t$.} \]
Since we focus on qualitative properties of the response of the state~$x$ to the external forcing term, from which analogous properties of the output~$y = Cx$ readily follow, hypothesis~\ref{H1} shall be sufficient for the majority of the remaining results.

\subsection{Convergence properties}\label{sec:convergence}

Here we investigate the state response of the forced Lur'e system~\eqref{eq:lure_4b} to forcing terms which are convergent, with main result Proposition~\ref{prop:convergence} below. For which purpose, we gather some technical material relating to the existence of constant trajectories of~\eqref{eq:lure_4b}.  To minimise disruption to the present section, the proof of the next lemma appears in the Appendix. 

For notational simplicity, we shall from hereon in make a slight abuse of notation and associate to each $z \in \mR^\tn$ the constant function $\mR_+ \to \mR^\tn$ with value $z$.

%
%
\begin{lemma}\label{lem:CICS_technical}
Consider the forced Lur'e system~\eqref{eq:lure_4b} and assume that~\ref{H1} holds. If there exist strictly positive~$v \in \mR^{\tp_1}$ and~$\rho \in (0,1)$ such that
    \begin{equation}\label{eq:G11_pf}
    v^{\top}\bG_{11}(0)\Delta \leq \rho v^{\top}\,,
\end{equation}
then, for every~$w_*\in \mR^{ \mathrm{\tm_2}}$, there exists a unique~$x_*\in\mR^\tn$ such that 
\begin{equation}
    \label{eq:x*_constant}
    0 = Ax_*+B_1f(C_1x_*)+B_2w_*\,.
\end{equation}
In particular,~$(w_*,x_*) \in \sB$. If~$f$ maps~$\mR_+^{\tp_1}$ into~$\mR_+^{\tm_1}$ and~$w_* \geq 0$, then~$x_* \geq 0$.
\end{lemma}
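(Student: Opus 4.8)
The plan is to recast the algebraic equation~\eqref{eq:x*_constant} as a fixed-point problem in the lower-dimensional output space~$\mR^{\tp_1}$, where the hypothesis~\eqref{eq:G11_pf} is naturally phrased, and to solve it by the Banach contraction principle in the weighted one-norm~$\lvert \cdot \rvert_v$. First I would record the sign structure: by Lemma~\ref{lem:H1},~\ref{H1} forces~$A$ to be Hurwitz, and since~$A$ is Metzler by~\ref{A1}, fact~\ref{ls:F3} gives~$-A^{-1} \geq 0$; combined with~$B_1, C_1 \geq 0$ this yields~$\bG_{11}(0) = C_1(-A^{-1})B_1 \geq 0$ and, likewise,~$C_1(-A^{-1})B_2 \geq 0$. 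Since~$A$ is invertible,~\eqref{eq:x*_constant} is equivalent to~$x_* = (-A^{-1})B_1 f(C_1 x_*) + (-A^{-1})B_2 w_*$, and applying~$C_1$ shows that~$\zeta_* := C_1 x_*$ must be a fixed point of the map~$S : \mR^{\tp_1} \to \mR^{\tp_1}$,~$S(\zeta) := \bG_{11}(0) f(\zeta) + C_1(-A^{-1})B_2 w_*$. Conversely, any fixed point~$\zeta_*$ of~$S$ yields a solution via~$x_* := (-A^{-1})B_1 f(\zeta_*) + (-A^{-1})B_2 w_*$, and this correspondence is a bijection, so uniqueness of~$x_*$ reduces to uniqueness of the fixed point of~$S$.

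The heart of the argument is the contraction estimate. For~$\zeta_1,\zeta_2 \in \mR^{\tp_1}$, I would use the triangle inequality componentwise together with~$\bG_{11}(0) \geq 0$ to get~$\lvert \bG_{11}(0)(f(\zeta_1)-f(\zeta_2))\rvert \leq \bG_{11}(0)\lvert f(\zeta_1)-f(\zeta_2)\rvert$, then feed in the incremental bound~$\lvert f(\zeta_1)-f(\zeta_2)\rvert \leq \Delta\lvert \zeta_1 - \zeta_2\rvert$ from~\ref{A2} (preserved under left-multiplication by the nonnegative matrix~$\bG_{11}(0)$), and finally apply~\eqref{eq:G11_pf}. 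The chain~$\lvert S(\zeta_1)-S(\zeta_2)\rvert_v = v^\top\lvert \bG_{11}(0)(f(\zeta_1)-f(\zeta_2))\rvert \leq v^\top \bG_{11}(0)\Delta\lvert \zeta_1-\zeta_2\rvert \leq \rho\, v^\top\lvert \zeta_1-\zeta_2\rvert = \rho\lvert \zeta_1-\zeta_2\rvert_v$ shows~$S$ is a contraction on the Banach space~$(\mR^{\tp_1}, \lvert \cdot \rvert_v)$ (recall~$\lvert \cdot \rvert_v$ is a genuine norm as~$v \gg 0$). The contraction mapping theorem then delivers a unique fixed point~$\zeta_*$, hence the unique~$x_*$, and since the constant function~$x_*$ solves the differential equation in~\eqref{eq:lure_4b} with~$\dot x_* = 0$, we conclude~$(w_*,x_*) \in \sB$.

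For the positivity claim, I would invoke cone-invariance of~$S$: if~$f$ maps~$\mR_+^{\tp_1}$ into~$\mR_+^{\tm_1}$ and~$w_* \geq 0$, then for~$\zeta \geq 0$ all terms defining~$S(\zeta)$ are products of nonnegative quantities, so~$S(\mR_+^{\tp_1}) \subseteq \mR_+^{\tp_1}$. As~$\mR_+^{\tp_1}$ is closed and~$S$-invariant, the Picard iterates~$\zeta^{(n+1)} := S(\zeta^{(n)})$ started at~$\zeta^{(0)} = 0$ remain in~$\mR_+^{\tp_1}$ and converge (in~$\lvert \cdot \rvert_v$, hence entrywise) to~$\zeta_*$, whence~$\zeta_* \geq 0$. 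Then~$f(C_1 x_*) = f(\zeta_*) \geq 0$, and~$x_* = (-A^{-1})B_1 f(\zeta_*) + (-A^{-1})B_2 w_* \geq 0$ since every factor is nonnegative.

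The main obstacle is not any single computation but the correct bookkeeping of the sign structure that makes the contraction work: one must verify that~$\bG_{11}(0) \geq 0$ (which rests on~$A$ being Hurwitz via Lemma~\ref{lem:H1} and Metzler via fact~\ref{ls:F3}) and that each inequality in the contraction chain is preserved under multiplication by nonnegative matrices. Reducing to the output space~$\mR^{\tp_1}$, rather than contracting directly on~$\mR^\tn$, is what lets the given scaling vector~$v$ from~\eqref{eq:G11_pf} supply the contraction constant~$\rho$ cleanly.
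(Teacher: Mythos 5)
Your proposal is correct, and its existence--uniqueness core coincides with the paper's own proof: your map $S$ is exactly the paper's contraction $F(y) := \bG_{11}(0)f(y) - C_1A^{-1}B_2w_*$ (note $C_1(-A^{-1})B_2 = -C_1A^{-1}B_2$), your contraction estimate in the norm $\lvert\cdot\rvert_v$ is the same chain of inequalities invoking $\bG_{11}(0)\geq 0$, the incremental bound in~\ref{A2} and~\eqref{eq:G11_pf}, and your bijection between solutions of~\eqref{eq:x*_constant} and fixed points of $S$ is the paper's uniqueness step phrased symmetrically. Where you genuinely diverge is the final positivity claim. The paper proves $x_*\geq 0$ \emph{dynamically}: it invokes statement~\ref{ls:convergence_equilibrium} of Proposition~\ref{prop:convergence} with a nonnegative convergent forcing term and nonnegative initial state, uses positive invariance of the flow, and obtains $x_*\geq 0$ as the limit of nonnegative states --- a forward reference that obliges the authors to remark explicitly that no circularity arises (that statement of Proposition~\ref{prop:convergence} relies on the lemma only for existence of the constant trajectory, not for its nonnegativity). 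Your argument is \emph{static} and self-contained: since $\bG_{11}(0)\geq 0$, $C_1(-A^{-1})B_2\geq 0$, $f(\mR_+^{\tp_1})\subseteq\mR_+^{\tm_1}$ and $w_*\geq 0$, the closed cone $\mR_+^{\tp_1}$ is $S$-invariant, so the Picard iterates started at $0$ remain in the cone and converge to $\zeta_*$, giving $\zeta_*\geq 0$ and hence $x_*\geq 0$ from its explicit formula. Your route buys a cleaner logical structure (no appeal to the dynamics, no circularity caveat, and the positivity proof lives entirely inside the lemma); the paper's route, in exchange, reuses machinery it needs anyway and spares re-verifying cone invariance of the static map. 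Both are sound.
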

When the nonlinearity~$f$ in~\eqref{eq:lure_4b} is scalar valued (that is,~$\tm_1 = \tp_1 = 1$), then condition~\eqref{eq:G11_pf} simplifies to the scalar inequality~$\bG_{11}(0) \Delta <1$ (the positive constant~$v$ cancels from both sides of~\eqref{eq:G11_pf}), which is a consequence of Lemma~\ref{lem:H1} as hypothesis~\ref{H1} holds. More generally, a simple sufficient condition for~\eqref{eq:G11_pf} is that~$\bG_{11}(0)\Delta$ is irreducible, in which case~\eqref{eq:G11_pf} holds with equality for some strictly positive~$v$ and~$\rho = \rho(\bG_{11}(0)\Delta) <1~$ by the Perron-Frobenius theorem and Lemma~\ref{lem:H1}.

For the next result, recall that a subset~$V \subseteq L^\infty(\mR_+, \mR^{\tm_2})$ is said to be {\em equi-convergent} to~$v_* \in \mR^{\tm_2}$ if, for all~$\eps >0$, there exists~$\tau \geq 0$ such that
\[ \| \sigma_\tau v - v_* \|_{L^\infty} \leq \eps \quad \forall \: v \in V\,.\]
%
%
\begin{proposition}\label{prop:convergence}
    Consider the Lur'e system~\eqref{eq:lure_4b},  and assume that~\ref{H1} is satisfied. Let~$1 \leq s < \infty$, and let~$(w_a,x_a), (w_b,x_b) \in \sB$. The following statements hold.
    \begin{enumerate}[label = {\rm (\roman*)}]
        \item  If~$w_a(t)-w_b(t) \rightarrow 0$ as~$t\rightarrow \infty$, or~$w_a-w_b \in L^s(\mR_+,\mR^{\tm_2})$, then~$x_a(t)-x_b(t) \rightarrow 0$  as~$t\rightarrow \infty$.
\item If~$w_a - w_b \in L^s_\gamma(\mR_+,\mR^{\tm_2})$ for some~$\gamma >0$, then~$x_a(t)-x_b(t) \rightarrow 0$ exponentially as~$t\rightarrow \infty$.
        \item \label{ls:convergence_equilibrium} Assume that~\eqref{eq:G11_pf} holds. If~$w_a(t)\rightarrow w_*$ as~$t\rightarrow\infty$, then~$x_a(t)\to x_*$ as~$t \to \infty$. Moreover, if~$V$ is equi-convergent to~$v_*$, then for every~$L >0$, the set of solutions
        \[ \big \{ x(\cdot\,; x^0, v) \: : \: (x^0 , v) \in \mR^\tn_+ \times V \; \text{with} \; \| x^0 \| + \| v\|_{L^\infty} \leq L \big\}\,,\]
        is equi-convergent to~$x_*$.
    \end{enumerate} 
\end{proposition}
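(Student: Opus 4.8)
The plan is to deduce all three parts of Proposition~\ref{prop:convergence} from the incremental estimate~\eqref{eq:incremental_main_H1} of Theorem~\ref{thm:incremental}, which is available since~\ref{H1} holds. Throughout I write $z := x_a - x_b$ and $v := w_a - w_b$, and I fix strictly positive $p$ and $\xi > 0$ from~\ref{H1}. Since $|\cdot|_p$ is a genuine norm on $\mR^\tn$ (as $p \gg 0$), convergence in $|\cdot|_p$ is equivalent to convergence in $\|\cdot\|$, so it suffices to show $|z(t)|_p \to 0$ (or that it decays exponentially) in each case. The key inequality, with $t_0$ free and $t_1 = t$, reads
\[
|z(t)|_p \leq \e^{-\xi(t-t_0)}|z(t_0)|_p + \e^{-\xi t}\|v\|_{L^1_\xi(t_0,t \mc B_2^\top p)} = \e^{-\xi(t-t_0)}|z(t_0)|_p + \int_{t_0}^t \e^{-\xi(t-\theta)}|v(\theta)|_{B_2^\top p}\,\rd\theta\,.
\]

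For part~(i), in the case $v(t)\to 0$ I would use a standard ``split the convolution'' argument: given $\eps>0$, choose $t_0$ so that $|v(\theta)|_{B_2^\top p}\leq \eps$ for $\theta \geq t_0$, bound the tail integral by $\eps/\xi$, and then let $t\to\infty$ so that the first term and the fixed finite initial contribution vanish. In the case $v \in L^s(\mR_+,\mR^{\tm_2})$, I would note that the map $t\mapsto \int_{t_0}^t \e^{-\xi(t-\theta)}|v(\theta)|_{B_2^\top p}\,\rd\theta$ is a convolution of an exponentially decaying $L^1$ (hence $L^{s_0}$) kernel with the $L^s$ function $|v|_{B_2^\top p}$; by Young's convolution inequality this lies in $L^s(\mR_+)$, and a convolution of $L^1_\xi$ kernel with an $L^s$ function that itself tends to zero in an averaged sense forces the pointwise limit to vanish. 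The cleanest route here is to observe that the kernel decays exponentially and $v\in L^s$ implies $v\to 0$ in the averaged/tail sense, giving $|z(t)|_p\to 0$; I would present the $L^s$ case by a direct $\eps$-estimate splitting the integral at a large $t_0$ where $\|v\|_{L^s(t_0,\infty)}$ is small and applying H\"older to the tail.

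For part~(ii), assuming $v \in L^s_\gamma$ for some $\gamma>0$, I would show exponential decay of $|z(t)|_p$. Without loss of generality take $\gamma < \xi$ (shrinking $\gamma$ only weakens the hypothesis). Pulling the weight out, $|v(\theta)|_{B_2^\top p} = \e^{-\gamma\theta}\big(\e^{\gamma\theta}|v(\theta)|_{B_2^\top p}\big)$ where the bracketed factor is in $L^s$; substituting into the convolution and applying H\"older (or Young) to the product of $\e^{-\xi(t-\theta)}\e^{-\gamma\theta}$ against the $L^s$ factor yields a bound of order $\e^{-\gamma t}$ (up to a constant depending on $s$, $\xi$, $\gamma$), while the homogeneous term decays like $\e^{-\xi t}$. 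Taking the minimum of the two rates gives exponential convergence to zero.

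For part~(iii), the first assertion follows by specialising part~(i): Lemma~\ref{lem:CICS_technical} (applicable since~\eqref{eq:G11_pf} holds) furnishes the constant trajectory $(w_*,x_*)\in\sB$, and applying part~(i) with $(w_b,x_b) = (w_*,x_*)$ and the hypothesis $w_a(t)\to w_*$, i.e. $w_a - w_* \to 0$, gives $x_a(t)\to x_*$. The substantive new content is the \emph{uniform} (equi-convergence) statement, which I expect to be the main obstacle. The plan is to fix $L>0$ and apply~\eqref{eq:incremental_main_H1} with $(w_b,x_b)=(w_*,x_*)$ and $t_0 = \tau$, where $\tau$ is the equi-convergence time furnished by the definition for a given $\eps$, so that $\|\sigma_\tau v - v_*\|_{L^\infty}\leq\eps$ \emph{simultaneously} for all $v\in V$. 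The tail integral is then bounded by $(\eps/\xi)\,|B_2^\top p|_{\mathbbm 1}$ \emph{uniformly} in $v\in V$, while the initial term $\e^{-\xi(t-\tau)}|x(\tau;x^0,v)-x_*|_p$ must be controlled uniformly over the admissible set. For this I need a uniform a priori bound on $|x(\tau;x^0,v)-x_*|_p$ that depends only on $L$ and $\tau$ (not on the individual $(x^0,v)$); such a bound follows from applying~\eqref{eq:incremental_corollary_H1} (with $s=\infty$) on $[0,\tau]$, since $\|x^0\|+\|v\|_{L^\infty}\leq L$ and $x_*$ is fixed, yielding $|x(\tau;x^0,v)-x_*|_p \leq C(L)$ uniformly. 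Then choosing $t$ large enough that $\e^{-\xi(t-\tau)}C(L)\leq\eps$ makes both terms small uniformly over the set, establishing equi-convergence to $x_*$. The delicate point is correctly marshalling the order of quantifiers so that a single $\tau$ works for the whole family $V$, which is exactly what the equi-convergence hypothesis on $V$ supplies.
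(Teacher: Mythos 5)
Your proposal is correct and follows essentially the same route as the paper: all three parts are deduced from the incremental estimates~\eqref{eq:incremental_main_H1} and~\eqref{eq:incremental_corollary_H1}, with Lemma~\ref{lem:CICS_technical} supplying the constant trajectory $(w_*,x_*)$ in part~(iii) and the equi-convergence handled by a uniform-in-$V$ application of~\eqref{eq:incremental_main_H1}, exactly as in the paper. The only differences are cosmetic: for the $L^s$ and $L^s_\gamma$ cases the paper first establishes boundedness of $x_a-x_b$ and then re-applies~\eqref{eq:incremental_corollary_H1} with $t_0=t/2$, whereas you split the convolution directly via H\"older (with a harmless WLOG $\gamma<\xi$ reduction in part~(ii)); both yield the same conclusions.
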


%
%
\begin{proof}
Let~$(w_a,x_a), (w_b,x_b) \in \sB$ denote two trajectories of~\eqref{eq:lure_4b}. Define~$\tilde r : = B_2^\top p \in \mR^{\tm_2}_+$.

(i) Define~$h : \mR_+ \to \mR_+$ by
\[    h(t) := \e^{-\xi t}  \int_{0}^{t} \e^{\xi \theta}|(w_{a}-w_{b})(\theta)|_{\tilde r} \,\rd \theta\,.\]
 If~$w_a(t)-w_b(t) \rightarrow 0$ as~$t\to \infty$, then a standard argument writing~$h$ as
\[ h(t) = \e^{-\xi t} \Big( \int_{0}^{t/2} \e^{\xi \theta}|(w_{a}-w_{b})(\theta)|_{\tilde r} \,\rd \theta+ \int_{t/2}^{t} \e^{\xi \theta}|(w_{a}-w_{b})(\theta)|_{\tilde r} \,\rd \theta\Big) \quad \forall \: t \geq 0\,,  \]
and estimating both terms above separately shows that~$h(t) \to 0$ as~$t \to \infty$. That~$(x_a - x_b)(t) \to 0$ as~$t\to \infty$ now follows from inequality~\eqref{eq:incremental_main_H1}. If instead~$w_a-w_b \in L^s(\mR_+)$, then the estimate~\eqref{eq:incremental_corollary_H1} shows that~$x_a - x_b$ is bounded. Therefore, another application of the estimate~\eqref{eq:incremental_corollary_H1}, now with~$t_0 = t/2$, gives
\begin{align}
    |(x_{a}-x_{b})(t)|_p & \leq \e^{-\xi t/2}|(x_{a}-x_{b})(t/2)|_p + \alpha_s \| w_{a}-w_{b} \|_{L^s(t/2,t \mc {\tilde r})} \label{eq:convergence_p1} \\
    & \leq \e^{-\xi t/2} \| x_{a}-x_{b} \|_{L^\infty(0,\infty\mc p)}+ \alpha_s \| w_{a}-w_{b} \|_{L^s(t/2,\infty \mc {\tilde r})} \to 0 \quad \text{as~$t \to \infty$.} \notag
\end{align}  
(ii) Since~$\| w_{a}-w_{b} \|_{L^s(t_0,t\mc {\tilde r})} \leq \e^{-\gamma t_0} \| w_{a}-w_{b} \|_{L^s_\gamma(t_0,t\mc {\tilde r})}$ for all~$t_0 \geq 0$, it follows from~\eqref{eq:incremental_corollary_H1} that~$x_a - x_b$ is bounded. Moreover, the inequality~\eqref{eq:convergence_p1} now yields that
    \[ |(x_{a}-x_{b})(t)|_p  \leq \e^{-\xi t/2} \| x_{a}-x_{b} \|_{L^\infty(0,\infty\mc p)} + \alpha_s \e^{-\gamma t/2} \| w_{a}-w_{b} \|_{L^s_\gamma(t/2,t\mc {\tilde r})} \quad \forall \: t \geq 0\,, \]
    giving the claimed exponential convergence.

(iii)  An application of Lemma~\ref{lem:CICS_technical} ensures the existence of a constant trajectory~$(w_*,x_*)$ of~\eqref{eq:lure_4b}. The claimed convergence now follows from statement (i) with~$(w_b, x_b) = (w_*,x_*)$. The claimed equi-convergence follows from~\eqref{eq:incremental_main_H1} as, under the given hypotheses, for all~$\eps, L >0$, there exists~$\tau \geq 0$ such that~$h(t) \leq \eps = \eps(L)$ for all~$t \geq \tau$ and all~$w_a \in V$ with~$\| w_a \|_{L^\infty} \leq L$.
\end{proof}

%
%
 \subsection{Response to almost periodic forcing}\label{sec:periodic}



Here we demonstrate that a consequence of incremental stability is desirable behaviour of the state and output of the forced Lur'e system~\eqref{eq:lure_4b} with respect to almost periodic forcing signals. The results of this section are inspired by those appearing in~\cite[Sections 3 and 4]{gilmore2020infinite} and~\cite[Section 4]{gilmore2021incremental}, which themselves trace their roots back to arguments used in~\cite{angeli2002lyapunov}. The key difference is that here the crucial incremental stability properties are ensured by the positivity hypotheses and structure.

We introduce additional notation and terminology required for the results of this section. Recall that~$R=\mR_+$ or~$\mR$. We define~$BC(R,\mR^\tn)$ and~$BUC(R,\mR^\tn)$ as
the spaces of all, respectively, bounded continuous and bounded uniformly
continuous functions~$R \to \mR^\tn$. Endowed with the supremum norm,~$BC(R,\mR^\tn)$ and~$BUC(R,\mR^\tn)$ are Banach spaces.

Let~$r \in \mR^\tn$ be strictly positive. We define the space of uniformly locally~$s$-integrable functions~$UL^s_{\rm loc}(R,\mR^\tn)$ by
 \[
 UL^s_{\rm loc}(R,\mR^\tn):=\left\{w\in L^s_{\rm loc}(R,\mR^\tn) \: : \: \sup_{a\in R}\int_{a}^{a+1}\|w(t)\|^s \, \rd t<\infty\right\}\,,
 \]
where~$1\leq s<\infty$. It is straightforward to show that~$UL^s_{\rm loc}(R,\mR^\tn)$ when equipped with the Stepanov norm
\[
\|w\|_{S^s_r}:=\sup_{a\in R}\left(\int_{a}^{a+1}\lvert w(t) \rvert^s_r\, \rd t\right)^{1/s}\,,
\]
 is a Banach space. For notational simplicity, we write~$\| \cdot \|_{S^s} = \| \cdot \|_{S^s_r}$ when the value of~$r$ is unimportant. The choice of~$1$ in the upper limit of the above integral is also unimportant insomuch as replacing~$a+1$ by~$a+b$ for positive~$b$ gives rise to equivalent norms, as does the choice of strictly positive~$r \in \mR^\tn$. 
From H\"{o}lder's inequality it is routine to establish that, for~$1\leq s_0\leq s_1<\infty$,
\begin{equation}\label{eq:embedding}
UL^{s_1}_{\rm loc}(R,\mR^\tn)\subset UL^{s_0}_{\rm loc}(R,\mR^\tn)
\quad\mbox{and}\quad
\| v\|_{S^{s_0}} \leq\| v\|_{S^{s_1}} \quad \forall\:v \in UL^{s_1}_{\rm loc}(R,\mR^\tn)\,,
\end{equation}
so that the space~$UL^{s_1}_{\rm loc}(R,\mR^\tn)$ is continuously embedded in~$UL^{s_0}_{\rm loc}(R,\mR^\tn)$ for~$1\leq s_0\leq s_1<\infty$. Furthermore, for~$\alpha\geq 0$, we define
\begin{equation}\label{eq:UaLsloc}
 U_\alpha L^s_{\rm loc}(R,\mR^\tn):=\left\{w\in UL^s_{\rm loc}(R,\mR^\tn): \lim_{t\to\infty}\|\sigma_t(\exp_\alpha\,w)\|_{S^s}=0\right\}.
\end{equation}
It is clear that~$L^s_\alpha(\mR_+,\mR^\tn)\subset U_\alpha L^s_{\rm loc}(R,\mR^\tn)$, and, if~$\alpha>0$,
then~$U_\alpha L^s_{\rm loc}(R,\mR^\tn)\subset L^s_\beta(R,\mR^\tn)$ for all~$\beta\in (0,\alpha)$.

%
%
Equipped with the above notation, we present a further consequence of Theorem~\ref{thm:incremental}.
\begin{corollary}\label{cor:incremental_stability_Sp}
    Consider the Lur'e system~\eqref{eq:lure_4b}, let~$1 \leq s < \infty$, and let~$(w_a,x_a), (w_b,x_b) \in \sB$. The following statements hold.
\begin{enumerate}[label = {\rm (\roman*)}]
    \item If~\ref{H1} holds, then with~$\tilde r : = B_2^\top p$, there exist positive constants~$\beta_0, \xi_0$ such that, for all~$t_1 \geq t_0 \geq 0$
    \begin{equation}\label{eq:incremental_corollary_Sp}
     |(x_{a}-x_{b})(t_1)|_p \leq \beta_0 \big( \e^{-\xi_0 (t_1-t_0)}|(x_{a}-x_{b})(t_0)|_p + \| \sigma_{ t_0}(w_{a}-w_{b}) \|_{S^s_{\tilde r}} \big)\,. 
\end{equation}
If~$w_a - w_b \in U_\alpha L^s_{\rm loc}(\mR_+, \mR^{\tm_2})$, then~$(x_{a}-x_{b})(t) \to 0$ exponentially as~$t \to \infty$.
\item If~\ref{H2} holds, then for all~$t_0 \geq 0$
\begin{align}
\| \sigma_{t_0} (y_{2,a} - y_{2,b}) \|_{S^1_q}  \leq \beta_0 |(x_{a}-x_{b})(t_0)|_p + (\beta_0 + \e^\xi) \| \sigma_{t_0} (w_{a} - w_{b}) \|_{S^1_r}\,, \label{eq:incremental_corollary_S1_both}
\end{align}
where~$(w_a,x_a,y_a), (w_b,x_b,y_b)$ are two input/state/output trajectories of~\eqref{eq:lure_4b}.
    \end{enumerate}
    The constants $\beta_0, \xi_0, \beta_1$ are independent of the trajectories.
\end{corollary}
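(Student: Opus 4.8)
The plan is to derive both Stepanov-norm estimates directly from the pointwise/integral inequalities of Theorem~\ref{thm:incremental} by restricting them to unit-length time windows and then taking a supremum over the window position. Throughout I write $z := x_a - x_b$ and $\tilde r := B_2^\top p$, and I will freely use that $\| \cdot \|_{S^s}$ is a supremum over windows $[a,a+1]$.

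\textbf{Statement (i).} The starting point is the $\xi$-weighted incremental estimate~\eqref{eq:incremental_main_H1}, namely $|z(t_1)|_p \leq \e^{-\xi(t_1-t_0)}|z(t_0)|_p + \e^{-\xi t_1} \| w_a - w_b \|_{L^1_\xi(t_0,t_1 \mc \tilde r)}$. First I would unpack the weighted $L^1_\xi$-norm as $\int_{t_0}^{t_1} \e^{\xi \theta} |(w_a - w_b)(\theta)|_{\tilde r}\, \rd \theta$, so that the forcing contribution to $|z(t_1)|_p$ is $\int_{t_0}^{t_1} \e^{-\xi(t_1 - \theta)} |(w_a - w_b)(\theta)|_{\tilde r}\, \rd \theta$. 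The key step is to bound this convolution-type integral by the Stepanov norm of the shift $\sigma_{t_0}(w_a - w_b)$. I would split the integration range $[t_0,t_1]$ into unit windows, on each of which $\int |(w_a - w_b)|_{\tilde r} \leq \| \sigma_{t_0}(w_a - w_b)\|_{S^1_{\tilde r}}$ (using $s=1$, or H\"older to pass from $s=1$ to general $1 \leq s < \infty$ via~\eqref{eq:embedding}), while the exponential weight $\e^{-\xi(t_1 - \theta)}$ contributes a geometric series $\sum_{j \geq 0} \e^{-\xi j}$ across windows, which sums to a finite constant. Collecting these gives~\eqref{eq:incremental_corollary_Sp} with $\xi_0 := \xi$ and $\beta_0$ absorbing the geometric-sum constant and the passage from $S^1$ to $S^s$. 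For the final exponential-decay claim, I would set $t_0 = t_1/2$ in~\eqref{eq:incremental_corollary_Sp}, bound the first term using boundedness of $z$ (which follows since $w_a - w_b \in U_\alpha L^s_{\rm loc} \subseteq UL^s_{\rm loc}$ makes the Stepanov norm finite), and observe that $w_a - w_b \in U_\alpha L^s_{\rm loc}(\mR_+,\mR^{\tm_2})$ forces $\| \sigma_{t_1/2}(w_a - w_b)\|_{S^s} \to 0$, with the decay being exponential because, by definition~\eqref{eq:UaLsloc}, $\| \sigma_t(\exp_\alpha (w_a-w_b))\|_{S^s} \to 0$ pulls out an $\e^{-\alpha t/2}$-type factor.

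\textbf{Statement (ii).} Here the input is the integral dissipation inequality~\eqref{eq:incremental_main_H2} under~\ref{H2}. I would apply it on a single window $[t_0, t_0 + 1]$, discarding the nonnegative terms $|z|_l$ and $|w_a-w_b|_k$ on the left (they only help) and also the nonnegative endpoint term $\e^{\xi(t_0+1)}|z(t_0+1)|_p$, to isolate
\[
\int_{t_0}^{t_0+1} \e^{\xi t} |(y_{2,a}-y_{2,b})(t)|_q \, \rd t \leq \e^{\xi t_0}|z(t_0)|_p + \int_{t_0}^{t_0+1} \e^{\xi t} |(w_a - w_b)(t)|_r \, \rd t\,.
\]
Dividing through by $\e^{\xi t_0}$ and using $1 \leq \e^{\xi(t - t_0)} \leq \e^{\xi}$ for $t \in [t_0, t_0+1]$, the left side dominates $\int_{t_0}^{t_0+1} |(y_{2,a}-y_{2,b})|_q = \| \sigma_{t_0}(y_{2,a}-y_{2,b})\|_{S^1_q}^{1}$-type window integral, and the right side is bounded by $|z(t_0)|_p + \e^{\xi} \| \sigma_{t_0}(w_a - w_b)\|_{S^1_r}$. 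Taking the supremum over window positions $a \geq t_0$ (equivalently shifting the whole argument) yields~\eqref{eq:incremental_corollary_S1_both}; I would reconcile the constant $\e^\xi$ and the $\beta_0$ coefficients with the statement, noting that the $|z(t_0)|_p$ coefficient is the same $\beta_0$ produced in part (i).

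The main obstacle I anticipate is the bookkeeping in part (i): converting the single cumulative inequality~\eqref{eq:incremental_main_H1} into a \emph{uniform} Stepanov bound requires carefully tiling $[t_0, t_1]$ by unit windows, controlling the interaction between the decaying exponential kernel and the window-wise supremum, and making the geometric-series constant explicit and trajectory-independent. A secondary subtlety is ensuring the constants $\beta_0, \xi_0$ do not depend on $s$ in an uncontrolled way; since the $S^s$ norms are nested by~\eqref{eq:embedding}, I would prove the estimate for $s=1$ and then invoke the embedding to obtain it for all $1 \leq s < \infty$ with the same or a larger constant. (The statement also mentions a constant $\beta_1$ which does not appear in the displayed inequalities, so I would simply confirm it plays no role here, or is a typo for $\xi_0$.)
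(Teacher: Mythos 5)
Your proposal is correct and takes essentially the same route as the paper: your part (ii) reproduces the paper's argument (apply \eqref{eq:incremental_corollary_H2} on unit windows of the shifted trajectories, use $1\leq \e^{\xi(t-\tau)}\leq \e^{\xi}$, take the supremum over windows, and then control $\sup_{a\geq t_0}|(x_a-x_b)(a)|_p$ via part (i) with $s=1$ and $\tilde r$ replaced by $r$ under \ref{H2}), while your part (i) supplies, via the unit-window tiling, geometric series, and the embedding \eqref{eq:embedding}, exactly the standard argument that the paper outsources by citation to \cite[Theorem 3.4, statement (3)]{gilmore2020infinite}, and your exponential-decay step rests on the same inequality $\e^{\alpha t}\|\sigma_t(w_a-w_b)\|_{S^s} \leq \|\sigma_t({\rm exp}_\alpha(w_a-w_b))\|_{S^s}$ that the paper records. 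You are also right that the constant $\beta_1$ in the statement is vestigial and plays no role.
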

%
%
\begin{proof}
The inequality~\eqref{eq:incremental_corollary_Sp} is derived from~\eqref{eq:incremental_corollary_H1} in the same manner as the proof of~\cite[Theorem 3.4, statement (3)]{gilmore2020infinite}. The claimed exponential convergence follows from~\eqref{eq:incremental_corollary_Sp} and the inequality
\[ \e^{\alpha t} \| \sigma_{t} (w_a -w_b) \|_{S^s} \leq \big\| \sigma_{t} \big({\rm exp}_\alpha (w_a -w_b)\big) \big \|_{S^s} \quad \forall \: t \geq 0\,.\]
We, therefore, omit the details.

We proceed to establish~\eqref{eq:incremental_corollary_S1_both}. 
An application of the inequality~\eqref{eq:incremental_corollary_H2} with~$t_0 = \tau$ and~$t_1 = \tau+1$ to the trajectories~$(\sigma_{t_0} w_k, \sigma_{t_0} x_k,\sigma_{t_0} y_k)$ for~$t_0 \geq 0$ yields, for all~$\tau \geq 0$,
\begin{align}
\| \sigma_{t_0} (y_{2,a} - y_{2,b}) \|_{L^1(\tau,\tau+1 \mc q)} & \leq  \e^{-\xi \tau} \| \sigma_{t_0} (y_{2,a} - y_{2,b}) \|_{L^1_\xi(\tau,\tau+1 \mc q)} 
\notag\\
& \leq |\sigma_{t_0}(x_{a}-x_{b})(\tau)|_p  + \e^{-\xi \tau} \| \sigma_{t_0}(w_{a} - w_{b}) \|_{L^1_\xi(\tau, \tau+1 \mc r)} \notag \\
& \leq |\sigma_{t_0}(x_{a}-x_{b})(\tau)|_p  + \e^{\xi} \| \sigma_{t_0}(w_{a} - w_{b}) \|_{L^1(\tau, \tau+1 \mc r)} \,.\label{eq:incremental_main_io'}
\end{align}
Taking the supremum in~\eqref{eq:incremental_main_io'} over~$\tau \geq 0$ yields that
\begin{equation}\label{eq:incremental_corollary_S1_both_p1}
     \|\sigma_{t_0} (y_{2,a} - y_{2,b}) \|_{S^1_q}  \leq \big(\sup_{\tau \geq 0} |\sigma_{t_0}(x_{a}-x_{b})(\tau)|_p + \e^\xi \| \sigma_{t_0}(w_{a}-w_{b})\|_{S^1_r}\big)\,.
\end{equation}
An application of~\eqref{eq:incremental_corollary_Sp} with~$s = 1$ and $t_1 = t_0 + \tau$ gives that
\[ |\sigma_{t_0}(x_{a}-x_{b})(\tau)|_p \leq \beta_0 \big( \e^{-\xi_0 \tau}|\sigma_{t_0}(x_{a}-x_{b})(0)|_p + \| \sigma_{ t_0}(w_{a}-w_{b}) \|_{S^1_r} \big) \quad \forall \: \tau \geq 0\,, \]
(here using that~$\tilde r$ may be replaced by~$r$) which, when substituted into~\eqref{eq:incremental_corollary_S1_both_p1}, gives the desired bound~\eqref{eq:incremental_corollary_S1_both}.
\end{proof}
An inspection of the above proof of the inequality~\eqref{eq:incremental_corollary_S1_both} reveals that twice the right-hand side of~\eqref{eq:incremental_corollary_S1_both} is an upper bound for
\[ \| \sigma_{t_0} (y_{2,a} - y_{2,b}) \|_{S^1_q}  + \| \sigma_{t_0} ( x_{a} - x_{b}) \|_{S^1_l} \,.\]
However, note that~$l \not \gg0$ in general, so $\| \cdot \|_{S^1_l}$ is not a norm.

Our present focus is on so-called Stepanov almost periodic functions. For which purpose, we first collect material on almost periodic functions in the sense of Bohr. For further background reading on almost periodic functions, we refer the reader to the texts~\cite{MR0275061,MR2460203,MR0020163}.

Recall that~$R=\mR$ or~$\mR_+$. A set~$S \subseteq R$ is said to be {\em relatively dense} (in~$R$) if there exists~$l>0$
such that
\[
 [a,a+l]\cap S \neq \emptyset \quad \forall \: a \in R\,.
\]
For~$\eps>0$, we say that~$\tau \in R$ is an~$\eps${\em -period} of
$v\in C(R,\mR^\tn)$ if    
\[
\|v(t)-v(t+\tau)\| \leq \eps \quad \forall \: t \in R\,.
\]
We denote by~$P(v,\eps)\subseteq R$ the set of~$\eps$-periods of~$v$ and we say that~$v\in C(R,\mR^\tn)$ is {\it almost periodic} (in the sense of Bohr) if~$P(v,\eps)$ is relatively dense in~$R$ for every~$\eps>0$. We denote the set of almost periodic functions~$v\in C(R,\mR^\tn)$ by~$AP(R,\mR^\tn)$, and mention that~$AP(R,\mR^\tn)$ is a closed subspace of~$BUC(R,\mR^\tn)$. It is clear that any continuous periodic  function is almost periodic.

The readily established equality
\begin{equation}\label{eq:ap_inf_tail}
\sup_{t\geq\tau}\|v(t)\| =\|v\|_{L^\infty(R)} \quad \forall \: \tau \in R, \; \forall \: v \in AP(R,\mR^\tn)\,,
\end{equation}
shows that functions in~$AP(R,\mR^\tn)$ are completely determined by their ``infinite right tails''.

Let~$C_0(\mR_+,\mR^\tn)$ denote the subspace of~$C(\mR_+,\mR^\tn)$ of functions which converge to zero as~$t \to \infty$. We define 
\[
 AAP(\mR_+,\mR^\tn):=AP(\mR_+,\mR^\tn)+C_0(\mR_+,\mR^\tn)\,,
 \]
 as the space of all {\em asymptotically almost  periodic} functions. It is a closed subspace of~$BUC(\mR_+,\mR^\tn)$ and, since by~\eqref{eq:ap_inf_tail} we have that~$AP(\mR_+,\mR^\tn) \cap C_0(\mR_+,\mR^\tn) = \{0\}$, it follows that~$AAP(\mR_+,\mR^\tn) =AP(\mR_+,\mR^\tn) \oplus C_0(\mR_+,\mR^\tn)$ (direct sum). Consequently, for each~$v \in AAP(\mR_+,\mR^\tn)$, there exists a unique~$v^{\rm ap} \in AP(\mR_+,\mR^\tn)$ such that~$v - v^{\rm ap} \in C_0(\mR_+,\mR^\tn)$ (cf. \cite[Lemma 5.1]{gilmore2020infinite}).

The spaces~$AP(\mR_+,\mR^\tn)$ and~$AP(\mR,\mR^\tn)$ are closely related, as we now briefly recall. Indeed, it is shown in~\cite[Section 4]{gilmore2020infinite}\footnote{Itself following an idea in~\cite[Remark on p.~318]{MR0773063}} that map~$AP(\mR_+,\mR^\tn) \to AP(\mR,\mR^\tn)$,~$v \mapsto v_{\rm e}$ given by
\begin{equation}\label{eq:ap_extension}
v_{\rm e}(t):=\lim_{k \to \infty}v(t+\tau_k) \quad \forall \: t \in \mR\,,    
\end{equation}
where~$\tau_k \in P(v,1/k)$ for each~$k \in \mN$ and~$\tau_k\to\infty$ as~$k\to\infty$ has the following properties:
\begin{itemize}
    \item is well defined, that is,~$v_{\rm e} \in AP(\mR,\mR^\tn)$\,;
    \item~$v_{\rm e}$ extends~$v$ to~$\mR$\,;
    \item~$v \mapsto v_{\rm e}$ is an isometric isomorphism~$AP(\mR_+,\mR^\tn) \to AP(\mR,\mR^\tn)$.
\end{itemize}
We now recall the concept of Stepanov almost periodicity which is weaker than that of Bohr.  Let~$v\in L_{\rm loc}^s(R,\mR^\tn)$,
where~$1\leq s <\infty$, and~$\eps>0$. We say that\footnote{The terminology {\em~$\eps$-period of~$v$ in the sense of Stepanov} is used in~\cite{gilmore2020infinite} and analogously for the following concepts.}~$\tau\in R$ is a {\em Stepanov~$\eps$-period of~$v$},   if
\[ \|(\sigma_\tau-I)v\|_{S^s} = \sup_{a\in R}\left(\int_a^{a+1}\|v(\theta+\tau)-v(\theta)\|^s \rd \theta\right)^{1/s}\leq\eps\,.\]
The set of Stepanov~$\eps$-periods of~$v$ is denoted by~$P_s(v,\eps)$. The function~$v$ is called {\em Stepanov almost periodic} if, for every~$\eps>0$, the set~$P_s(v,\eps)$ is relatively dense in~$R$. 

We let~$S^s(R,\mR^\tn)$ denote the set of all functions in~$L_{\rm loc}^s(R,\mR^\tn)$ which are Stepanov almost periodic. It is clear that~$AP(R,\mR^\tn)\subset S^s(R,\mR^\tn)$ (where the inclusion is strict, since~$S^s(R,\mR^\tn)$ contains discontinuous functions), and~$P(v,\eps)\subset P_s(v,\eps)$ for every~$v\in AP(R,\mR^\tn)$ and every~$\eps>0$. Moreover, it is readily shown that~$S^s(R,\mR^\tn)$ is a closed subspace of~$UL_{\rm loc}^s(R,\mR^\tn)$ with respect to the Stepanov norm~$\|\,\cdot\,\|_{S^s}$. It follows from \eqref{eq:embedding} that, for~$1\leq s_0\leq s<\infty$ and~$v\in L^s_{\rm loc}(R,\mR^\tn)$,~$P_s(v,\eps)\subset P_{s_0}(v, \eps)$, and consequently,~$S^s(R,\mR^\tn)\subset S^{s_0}(R,\mR^\tn)$ with continuous embedding.

Routine modifications to the map~$v \mapsto v_{\rm e}$ given by~\eqref{eq:ap_extension} shows that it extends to an isometric isomorphism of~$S^s(\mR_+,\mR^\tn) \to S^s(\mR,\mR^\tn)$.

The space of Stepanov asymptotically almost periodic functions~$AS^s(\mR_+,\mR^\tn)$ is defined as
\begin{equation}\label{eq:AS_definition}
 AS^s(\mR_+,\mR^\tn):=S^s(\mR_+,\mR^\tn)+U_0L^s_{\rm loc}(\mR_+,\mR^\tn)\,.   
\end{equation}
It is shown in~\cite[Lemma 4.2]{gilmore2020infinite} that~$\|\sigma_\tau v\|_{S^s} = \| v \|_{S^s}$ for all~$v \in S^s(\mR_+,\mR^\tn)$ and all~$\tau \in \mR_+$ which, in particular, yields that~$S^s(\mR_+,\mR^\tn) \cap U_0L^s_{\rm loc}(\mR_+,\mR^\tn) = \{0\}$. Therefore, the summation in~\eqref{eq:AS_definition} is a direct sum and, consequently, for every~$v \in AS^s(\mR_+,\mR^\tn)$ there exists a unique~$v^{\rm s} \in S^s(\mR_+,\mR^\tn)$ such that~$v - v^{\rm s} \in U_0L^s_{\rm loc}(\mR_+,\mR^\tn)$, cf.~\cite[Lemma 4.4]{gilmore2020infinite}.

Finally, we let~$\sB\sB$ denote the bilateral behaviour of~\eqref{eq:lure_4b}, that is, the set of pairs~$(w,x)$ where~$w \in L^1_{\rm loc}(\mR,\mR^{\tm_2})$ and~$x : \mR \to \mR^{\tn}$ is a locally absolutely continuous function such that~$w$ and~$x$ satisfy the differential equation in~\eqref{eq:lure_4b} almost everywhere on~$\mR$. Since~$f$ is assumed to be time-independent, both $\sB$ and~$\sB\sB$ are shift-invariant, that is,
\begin{equation}\label{eq:shift}
(v,x)\in \sB \, (\sB\sB) \implies (\sigma_{\tau}v,\sigma_{\tau}x)\in \sB \, (\sB\sB)\quad \forall \,
\tau \in \mR_+ \, (\mR)\,.
\end{equation}

%
%
The following proposition is the main result of this section, and is inspired by the combination of~\cite[Theorem 4.5]{gilmore2020infinite} and~\cite[Theorems 4.3,4.5]{gilmore2021incremental}.

\begin{proposition}\label{prop:periodic}
Consider the forced Lur's sytem~\eqref{eq:lure_4b}, assume that~\ref{H1} and~\eqref{eq:G11_pf} are satisfied, and let~$ w \in S^1(\mR_+,\mR^{\tm_2})$. Then there exists a unique~$z^{\rm ap} \in AP(\mR_+,\mR^\tn)$ such that~$(w, z^{\rm ap}) \in \sB$ and, for every~$\eps >0$, there exists~$\delta>0$ such that~$P_1(w,\delta)\subset P(z^{\rm ap},\eps)$. 

Let~$(v,x) \in \sB$. The following further statements hold.
\begin{enumerate}[label = {\rm (\arabic*)}, itemsep = 1ex, topsep = 1ex, leftmargin = 1ex, itemindent = 4ex]
\item\label{ls:ap_1} If~$v\in AS^1(\mR_+,\mR^{\tm_2})$ with~$v^{\rm s}=w$, then
  \[
    \lim_{t\to\infty}\big(x(t)-z^{\rm ap}(t)\big)=0\,,
  \]
that is,~$x\in AAP(\mR_+,\mR^\tn)$ with~$x^{\rm ap}=z^{\rm ap}$.
\item \label{ls:ap_2} Statement~\ref{ls:ap_1} remains true if any of the following hold:
\begin{enumerate}[label = {\rm(\roman*)}]
    \item~$w\in AP(\mR_+,\mR^{\tm_2})$ and~$v \in AAP(\mR^+, \mR^{\tm_2})$ satisfies~$v^{\rm ap} = w$ and, in this case,~$v^{\rm ap} = v^{\rm s} = w$;
    \item~$v, w \in L^\infty(\mR_+, \mR^{\tm_2})$ and~$\| v - w \|_{L^\infty(t, \infty)}\to 0$ as~$t\to \infty$, or;
    \item~$v - w \in L^1(\mR_+, \mR^{\tm_2})$.
\end{enumerate}
\item\label{ls:ap_3} If~$v-w\in U_\alpha L^1_{\rm loc}(\mR_+,\mR^{\tm_2})$ for some~$\alpha>0$, then the convergence in statement~\ref{ls:ap_1} is exponentially fast.
\item\label{ls:ap_4} If~$w$ is periodic with period~$\tau$, then~$z^{\rm ap}$ is~$\tau$-periodic.
\item\label{ls:ap_5} $(w_{\rm e},z^{\rm ap}_{\rm e})\in\sB\sB$ and there is no other bounded function~$\hat x : \mR \to \mR^\tn$ such that~$(w_{\rm e}, \hat x) \in \sB\sB$.
\item \label{ls:ap_6} If~$(v,x^{\rm ap}) \in \sB$ with~$v \in S^1(\mR_+, \mR^{\tm_2})$ and~$x^{\rm ap} \in AP(\mR_+,\mR^\tn)$, then there exists~$\beta_0 >0$ such that
\begin{equation}\label{eq:ap_norm_diff_1}
    \| x^{\rm ap} - z^{\rm ap} \|_{L^\infty(\mR_+ \mc p)} \leq \beta_0 \| v - w\|_{S^1_{\tilde r}} \,, 
\end{equation} 
where~$p$ and~$\tilde r := B_2^\top p$ are as in~\ref{H1}. The constant~$\beta_0$ is independent of the trajectories. If, additionally,~$v, w \in L^\infty(\mR_+, \mR^{\tm_2})$, then
\begin{equation}\label{eq:ap_norm_diff_2}
 \| x^{\rm ap} - z^{\rm ap} \|_{L^\infty(\mR_+ \mc p)} \leq c_1\| v - w\|_{L^\infty(\mR_+ \mc r)} \,,
 \end{equation} 
 for some constant~$c_1 >0$ independent of the trajectories.
%
%
\item \label{ls:ap_7} If~\ref{H2} holds and~$(w_k,z_k,y_k)$ are input/state/output trajectories of~\eqref{eq:lure_4b} for~$k =a,b$ with~$w_k \in S^1(\mR_+, \mR^{\tm_2})$,~$z_k = z^{\rm ap}_k \in AP(\mR_+,\mR^\tn)$, then
 \[   \| y_{2,a} - y_{2,b} \|_{S^1_q} \leq c_2\| w_a - w_b\|_{S^1_{r}} \,,\]
for some constant~$c_2>0$ independent of the trajectories.
\end{enumerate}

\end{proposition}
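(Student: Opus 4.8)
The plan is to reduce statement \ref{ls:ap_7} to a \emph{two-sided} version of the dissipation estimate \eqref{eq:incremental_main_H2} and then convert a weighted half-line $L^1$ bound into the desired Stepanov bound. First I would record the structural facts. Since $y_{2,k} = C_2 z_k$ with $z_k = z_k^{\rm ap}\in AP(\mR_+,\mR^\tn)$, the difference $y_{2,a}-y_{2,b}=C_2(z_a-z_b)$ is Bohr almost periodic, hence lies in $S^1(\mR_+,\mR^{\tp_2})$, while $w_a-w_b\in S^1(\mR_+,\mR^{\tm_2})$. By the uniqueness part of the proposition together with statement \ref{ls:ap_5}, the extensions satisfy $(w_{k,\mathrm e},z_{k,\mathrm e})\in\sB\sB$ for $k=a,b$, so I may work in the bilateral behaviour.

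Next I would observe that the derivation of \eqref{eq:incremental_main_H2} in the proof of Theorem~\ref{thm:incremental} rests only on the variation of parameters formula and the almost-everywhere differential inequality for $t\mapsto |x_a-x_b|_p$, both of which hold verbatim for bilateral trajectories. Writing $z:=z_{a,\mathrm e}-z_{b,\mathrm e}$, $w:=w_{a,\mathrm e}-w_{b,\mathrm e}$ and $y:=y_{2,a,\mathrm e}-y_{2,b,\mathrm e}=C_2 z$, and discarding the nonnegative $|\cdot|_k$- and $|\cdot|_l$-terms on the left, this yields
\[ \int_{t_0}^{t_1}\e^{\xi t}|y(t)|_q\,\rd t + \e^{\xi t_1}|z(t_1)|_p \leq \e^{\xi t_0}|z(t_0)|_p + \int_{t_0}^{t_1}\e^{\xi t}|w(t)|_r\,\rd t \quad \forall\: t_0\leq t_1\,. \]
The crucial gain over the $\mR_+$ setting is that $z\in AP(\mR,\mR^\tn)$ is bounded, so $\e^{\xi t_0}|z(t_0)|_p\to 0$ as $t_0\to-\infty$; letting $t_0\to-\infty$ produces the half-line bound $\int_{-\infty}^{t_1}\e^{\xi t}|y(t)|_q\,\rd t\leq \int_{-\infty}^{t_1}\e^{\xi t}|w(t)|_r\,\rd t$ for every $t_1\in\mR$.

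Then I would convert this into a Stepanov estimate. Fixing $a\in\mR$ and taking $t_1=a+1$, the left side is bounded below by $\e^{\xi a}\int_a^{a+1}|y|_q$, while splitting $(-\infty,a+1]$ into unit intervals and summing the resulting geometric series bounds the right side by $\e^{\xi(a+1)}(1-\e^{-\xi})^{-1}\|w\|_{S^1_r}$. Cancelling $\e^{\xi a}$ and taking the supremum over $a\in\mR$ gives $\|y\|_{S^1_q}\leq c_2\|w\|_{S^1_r}$ with $c_2:=\e^{\xi}/(1-\e^{-\xi})$, a constant depending only on $\xi$. Finally, since the extension map is a linear isometric isomorphism $S^1(\mR_+,\cdot)\to S^1(\mR,\cdot)$ that commutes with $C_2$ and with differences, it identifies $\|y\|_{S^1_q}=\|y_{2,a}-y_{2,b}\|_{S^1_q}$ and $\|w\|_{S^1_r}=\|w_a-w_b\|_{S^1_r}$, delivering the claim.

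I expect the main obstacle to be precisely the need to pass to the bilateral behaviour. On $\mR_+$ the readily available estimate \eqref{eq:incremental_corollary_S1_both} carries an initial-state term $|(z_a-z_b)(t_0)|_p$ that cannot be removed, because an almost periodic state difference need not decay (nor even have small infimum) in forward time, so no input-output gain follows directly. It is the two-sided extension, combined with boundedness of Bohr almost periodic functions and the exponential weight, that lets the initial data be sent to $-\infty$ and eliminated; this is the heart of the argument, the remaining steps being the routine geometric-series and supremum manipulation and the appeal to the isometry transferring norms between $\mR_+$ and $\mR$.
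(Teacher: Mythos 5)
Your proof is correct, but it takes a genuinely different route from the paper's. The paper stays entirely on the half-line $\mR_+$: it applies~\eqref{eq:incremental_corollary_S1_both} at a shifted initial time $t_0 = \tau$, controls the resulting initial-state term $|(z_a - z_b)(\tau)|_p$ via the incremental ISS estimate~\eqref{eq:incremental_corollary_Sp} (which bounds it by $\beta_0\e^{-\xi_0\tau}|(z_a-z_b)(0)|_p + \beta_0\|w_a - w_b\|_{S^1_r}$), and then exploits shift-invariance of the Stepanov norm of (Stepanov) almost periodic functions, namely $\|\sigma_\tau(y_{2,a}-y_{2,b})\|_{S^1_q} = \|y_{2,a}-y_{2,b}\|_{S^1_q}$ and $\|\sigma_\tau(w_a-w_b)\|_{S^1_r} = \|w_a-w_b\|_{S^1_r}$, before letting $\tau \to \infty$. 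In particular, your diagnosis of the ``main obstacle'' is not accurate: the initial-state term \emph{can} be removed in forward time --- not because the state difference decays to zero, but because~\eqref{eq:incremental_corollary_Sp} shows it is eventually dominated by $\beta_0\|w_a-w_b\|_{S^1_r}$, which is precisely the gain structure sought. Your alternative --- passing to the bilateral behaviour via the uniqueness assertion and statement~\ref{ls:ap_5}, re-deriving~\eqref{eq:incremental_main_H2} for trajectories in $\sB\sB$ (legitimate, since the proof of Theorem~\ref{thm:incremental} uses only the variation-of-parameters formula and a.e.\ differentiation, both valid on any interval of $\mR$), sending $t_0 \to -\infty$ using boundedness of almost periodic functions, and converting the resulting weighted bound on $(-\infty,t_1]$ into a Stepanov bound by a geometric series --- is a valid, self-contained argument. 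It buys an explicit constant $c_2 = \e^\xi/(1-\e^{-\xi})$ depending only on $\xi$ (the paper's route yields roughly $\beta_0^2 + \beta_0 + \e^\xi$), and it makes transparent that the gain is a pure steady-state property with no transient contribution; the price is the extra infrastructure --- the bilateral dissipation inequality, statement~\ref{ls:ap_5} applied to both trajectories, and the linearity and isometry of the extension map $S^1(\mR_+,\cdot) \to S^1(\mR,\cdot)$ needed to identify $(w_a - w_b)_{\rm e} = w_{a,{\rm e}} - w_{b,{\rm e}}$ and to transfer the norms back to $\mR_+$ --- all of which you correctly invoke, but which the paper's shorter argument avoids.
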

Observe that (asymptotic) almost periodic properties of~$y$ in~\eqref{eq:lure_4b} follow immediately from those of~$x$, and so do not appear separately in the statement of the result above, apart from statement~\ref{ls:ap_7} which imposes hypothesis~\ref{H2}. The hypothesis that~\eqref{eq:G11_pf} holds is made to ensure that every trajectory $(w,z) \in \sB$ with $w \in S^1(\mR_+,\mR^{\tm_2})$ has bounded $z$. It could be replaced by the hypothesis that for each $w \in S^1(\mR_+,\mR^{\tm_2})$, there exists $(w,z) \in \sB$ with bounded $z$, or simply that $f(0) =0$.

The conclusions of Proposition~\ref{prop:periodic} appear, in slightly adapted form, across~\cite[Theorem 4.5]{gilmore2020infinite} and~\cite[Theorems 4.3, 4.5]{gilmore2021incremental}, with the comment that the first part of statement~\ref{ls:ap_6} and statement~\ref{ls:ap_7} appear new. So, in that sense, the bulk of the conclusions of above result are not new. Rather, the novelty of the current result is that the underlying and essential incremental stability properties come from Theorem~\ref{thm:incremental} and its corollaries, which leverage the positive systems structure considered presently, where the key differences are seen in the multi-input multi-output (MIMO) case. Indeed, the works~\cite{gilmore2020infinite, gilmore2021incremental} impose norm assumptions in the linear stabilisability and ``nonlinear ball'' hypotheses (the latter of which can be reformulated as sector conditions in terms of inner products). The assumptions~\ref{A2} and~\ref{H1}/\ref{H2} are more in the spirit of linear dissipation and linear sector conditions, which are not equivalent to norm conditions in general in the MIMO case. For more detailed descriptions of the differences between these types of assumptions, we refer the reader to~\cite[Section III. E.]{drummond2022aizerman}.

Extensive commentary is given in~\cite{gilmore2020infinite} and~\cite{gilmore2021incremental} on the results~\cite[Theorem 4.5]{gilmore2020infinite} and~\cite[Theorems 4.3, 4.5]{gilmore2021incremental}, respectively, and much of which is applicable here. We comment, for instance, that statement~(3) follows as a consequence of~\cite[Theorem 4.4]{natarajan2013behavior} which considers certain infinite-dimensional Lur'e systems. Additional results in the literature related to Proposition~\ref{prop:periodic} include~\cite[Theorem 3.2.9]{gilmore_thesis}, \cite[Theorem 4.3]{gilmore2020stability}, \cite[Theorem 2]{MR1772240} and~\cite[Theorem 1]{MR0167681}. The papers~\cite{MR1772240,MR0167681} are restricted to scalar nonlinearities, that is,~$\tm_1 = \tp_1 = 1$, and the papers~\cite{gilmore2020stability,MR1772240,MR0167681} consider almost periodic forcing functions in the sense of Bohr only. The works~\cite{gilmore2020stability,gilmore_thesis,MR0167681} and~\cite{gilmore2020infinite,MR1772240} adopt Lyapunov and input-output approaches, respectively.

A further conclusion of Proposition~\ref{prop:periodic} is valid, namely that~$\mod(z^{\rm ap}_{\rm e}) \subseteq \mod(\widetilde{w_{\rm e}})$ where~$\mod(z)$ for~$z \in AP(\mR,\mR^\tn)$ denotes the~$\mZ$-module of generalised Fourier coefficients associated with~$z$. Here~$\widetilde{z}$ denotes the so-called Bochner transform of~$z \in S^1(\mR,\mR^{\tn})$, which gives rise to a vector-valued almost periodic function in the sense of Bohr, see~\cite{gilmore2020infinite,gilmore2021incremental}. However, it follows that~$\mod(\widetilde{w_{\rm e}}) = \mod(w_{\rm e})$ if~$w \in AP(\mR_+,\mR^{\tm_2})$. For brevity, we have not reproduced the details here.

\begin{proof}[Proof of Proposition~\ref{prop:periodic}]  
The proofs of the first set of claims, and statements~\ref{ls:ap_1}, \ref{ls:ap_3}, \ref{ls:ap_4}, and~\ref{ls:ap_5}, are somewhat lengthy but essentially follow the corresponding statements of~\cite[Theorem 4.5]{gilmore2020infinite}, {\em mutatis mutandis}. There the forcing function is assumed to belong to~$S^2$, and estimates involving the~$S^2$ norm are used. Here we use the~$S^1$ norm, in particular invoking the~$L^1$ and~$S^1$ estimates in Corollaries~\ref{cor:incremental_stability} and~\ref{cor:incremental_stability_Sp}, respectively. Special attention is paid in~\cite[Theorem 4.5]{gilmore2020infinite} to the output variables, as these functions need not in general admit the same regularity properties as the state variable in the infinite-dimensional setting considered there. The setting here is much simpler in that regard. We have omitted the details for brevity.

Statement~\ref{ls:ap_2} follows from statement~\ref{ls:ap_1} as, in each case, the hypotheses imply that~$v \in AS^1(\mR_+,\mR^{\tm_2})$ and~$v^{\rm s} = w$. 

To prove statement~\ref{ls:ap_6}, first note that the existence of~$(v,x^{\rm ap}) \in \sB$ with~$x^{\rm ap} \in AP(\mR_+,\mR^\tn)$ for given~$v \in S^1(\mR_+, \mR^{\tm_2})$ follows from the first part of the proposition. Now let~$\eps >0$ be given.

It follows from the inequality~\eqref{eq:incremental_corollary_Sp} with~$s=1$, $t_0 =0$ and $t_1 = t \geq 0$, that
\[ |(x^{\rm ap} - z^{\rm ap})(t)|_p \leq \beta_0 \big( \e^{-\xi_0 t}|(x^{\rm ap} - z^{\rm ap})(0)|_p + \| v - w \|_{S^1_{\tilde r}} \big) \quad\forall \: t\geq  0\,,\]
whence there exists~$\tau = \tau(\eps)>0$ such that
\[ \sup_{t \geq \tau}\lvert (x^{\rm ap} - z^{\rm ap})(t) \rvert_p \leq \eps + \beta_0 \| v-w \|_{S^1_{\tilde r}}\,. \]
Therefore, since~$x^{\rm ap} - z^{\rm ap} \in AP(\mR_+,\mR^\tn)$, the equality~\eqref{eq:ap_inf_tail} yields that
\[ \| x^{\rm ap} - z^{\rm ap} \|_{L^\infty(\mR_+ \mc p)} \leq \eps + \beta_0 \| v - w \|_{S^1_{\tilde r}}\,.  \]
Since~$\eps >0$ was arbitrary, inequality~\eqref{eq:ap_norm_diff_1} holds.

The inequality~\eqref{eq:ap_norm_diff_2} is proven similarly, now starting from~\eqref{eq:incremental_corollary_H1} with~$s=\infty$. We leave the remaining details to the reader. The proof of statement~\ref{ls:ap_7} is also similar, and now invokes the estimates~\eqref{eq:incremental_corollary_S1_both} and then~\eqref{eq:incremental_corollary_Sp} to show that, for all~$\eps >0$, there exists~$\tau >0$ such that
\[ \| y_{2,a} - y_{2,b}\|_{S^1_q} = \| \sigma_{\tau} (y_{2,a} - y_{2,b}) \|_{S^1_q} \leq \eps + c \| \sigma_\tau( w_a - w_b) \|_{S^1_r} = \eps + c \|  w_a - w_b \|_{S^1_r}\,,\]
from which the desired estimate is inferred.
\end{proof}

\section{Examples}\label{sec:examples}

We illustrate our results with two examples.

\begin{example}\label{ex:1}
Consider~\eqref{eq:lure_4b} with $\tn = \tm_1 = \tp_1 = 3$, $\tm_2 = 2$, $\tp_2 = 1$ and model data
\begin{align*}
     A &:= \bpm{-1 & 0 & 0 \\ 0 & -10 & 0 \\ 0 & 0& -100} + 0.01 \mOne \mOne^\top, \quad B_1 := I, \quad B_2 := \bpm{0 & 0 \\ 1 & 0 \\0 &1},\\
     C_1 &:= \bpm{0 &1 &1 \\ 0 &0 &1 \\ 0 &0 &1}, \quad C_2 := \bpm{1 & 0 &0}\,,
\end{align*}
which satisfies~\ref{A1}, and so-called diagonal (or repeated) nonlinearity $f : \mR^3 \to \mR^3$ given by
\[ f(t,y) = f(y) := \bpm{g(y_1) & 0 & 0 \\ 0 & g(y_2) & 0 \\ 0 & 0 & g(y_3)} \quad \forall \: y \in \mR^3, \]
where $g : \mR\to \mR$ is a slope-restricted nonlinearity satisfying
\begin{equation}\label{eq:slope_condition}
     0 \leq \frac{g(z_1) - g(z_2)}{z_1 - z_2} \leq \delta \quad \forall \: z_1, z_2 \in \mR, \; z_1 \neq z_2\,,
\end{equation}
for some $\delta >0$. A consequence of the above assumption on $g$ is that $f$ satisfies~\ref{A2} with $\Delta = \delta I$. The Lur'e system
\[ \dot x = A x + B_1 f(C_1 x)\,,\]
is considered in~\cite[Section III. F., Example 3]{drummond2022aizerman} in the context of comparing the real Aizerman conjecture for positive Lur'e systems to other absolute stability criteria. It is shown there that hypothesis~\ref{H1} holds with $\delta \approx 89.89$. 

A calculation shows that $\Delta \bG_{11}(0) \gg0$, so that $\Delta \bG_{11}(0)$ is {\em a forteriori} irreducible, and hence hypothesis~\eqref{eq:G11_pf} holds. In particular, the hypotheses of Proposition~\ref{prop:convergence} are satisfied, and from that result we obtain convergence of the state (and hence output) for all converging inputs. 

For a numerical simulation we take
\[ g(z) := \frac{\delta_0}{2} z + \frac{\delta_0}{2} \sin(z), \quad \delta_0 := 89\,,\]
which satisfies~\eqref{eq:slope_condition}, and a graph of which is plotted in Figure~\ref{fig:ex_1_f}. In the following simulations we take the convergent forcing terms
\[ w(t) = \bpm{w^1(t) \\ w^2(t)} := k\bpm{1 + t^2 \e^{-t} \\ 1 + \frac{t}{1+t^3}}, \quad t \geq 0\,, \quad k = 3,6,9\,,\]
so that $k$ plays the role of an amplitude parameter, and which are plotted in Figure~\ref{fig:ex_1_w}. Figure~\ref{fig:ex_1_y1} plots the performance output $y_2(t) = C_2 x(t)$ against $t$ for $(w,x) \in \sB$ with $x(0) =0$, for varying $k$. In each case, convergence to some $y_*$ over time is observed. Figure~\ref{fig:ex_1_y2} plots $(y_{2,a} - y_{2,b})(t) = C_2 (x_a - x_b)(t)$ against $t$ for $(w,x_a), (w,x_b) \in \sB$ with
\[ x_a(0) := 4k \bpm{-1&1&1}^\top \quad \text{and} \quad x_b(0) := 4k \bpm{-1 & 0 &-1}^\top, \quad k = 3,6,9\,.\]
Now convergence $(y_{2,a} - y_{2,b})(t) \to 0$ is observed.
\end{example}

%
%
\begin{figure}[h!]
    \centering
    \begin{subfigure}[]{0.425\textwidth}
    \centering
    \includegraphics[width = 0.95\textwidth]{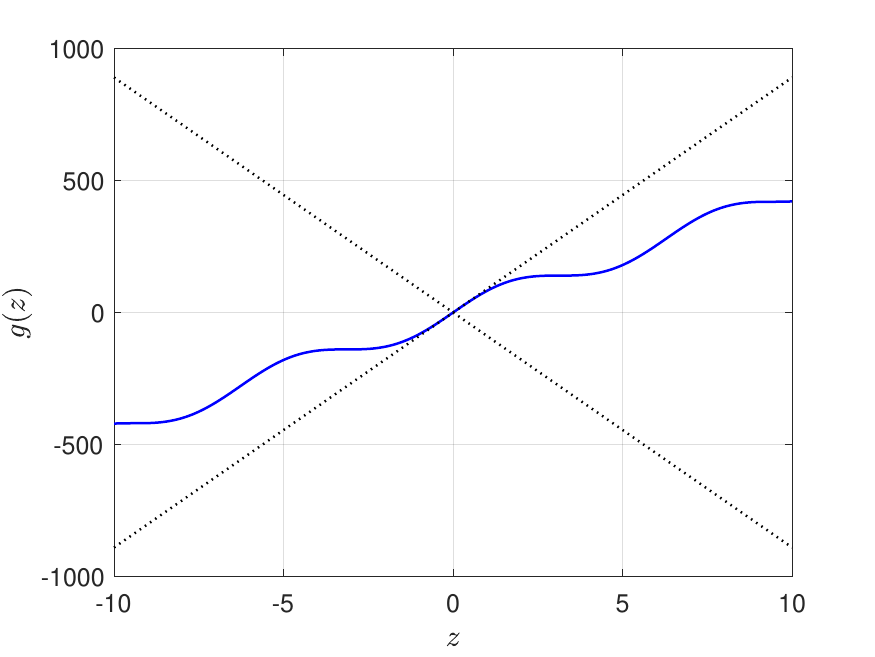}
    \caption{}
    \label{fig:ex_1_f}
    \end{subfigure}%
        \begin{subfigure}[]{0.425\textwidth}
    \centering
    \includegraphics[width = 0.95\textwidth]{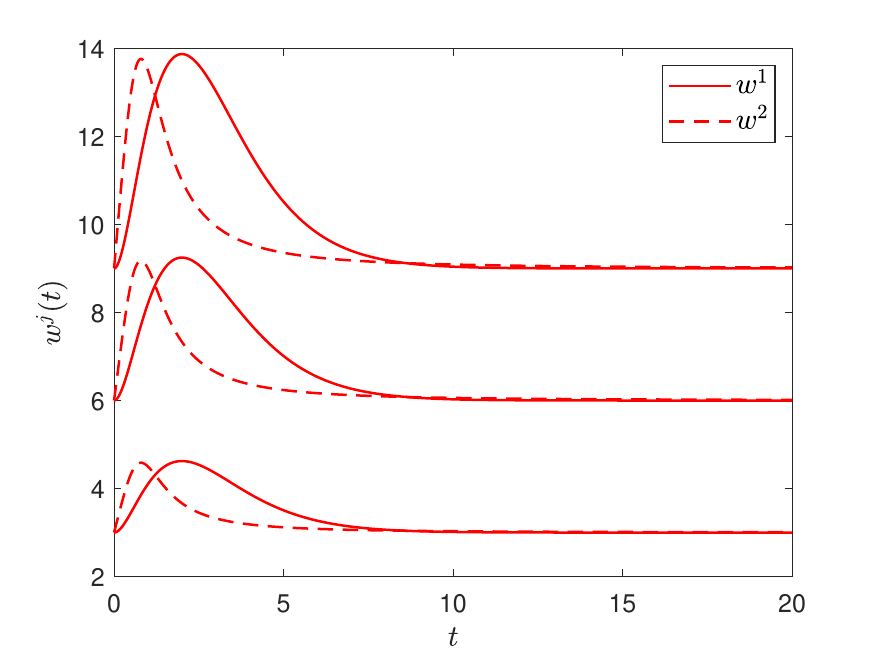}
    \caption{}
    \label{fig:ex_1_w}
    \end{subfigure}%
    \\
        \begin{subfigure}[]{0.425\textwidth}
    \centering
    \includegraphics[width = 0.95\textwidth]{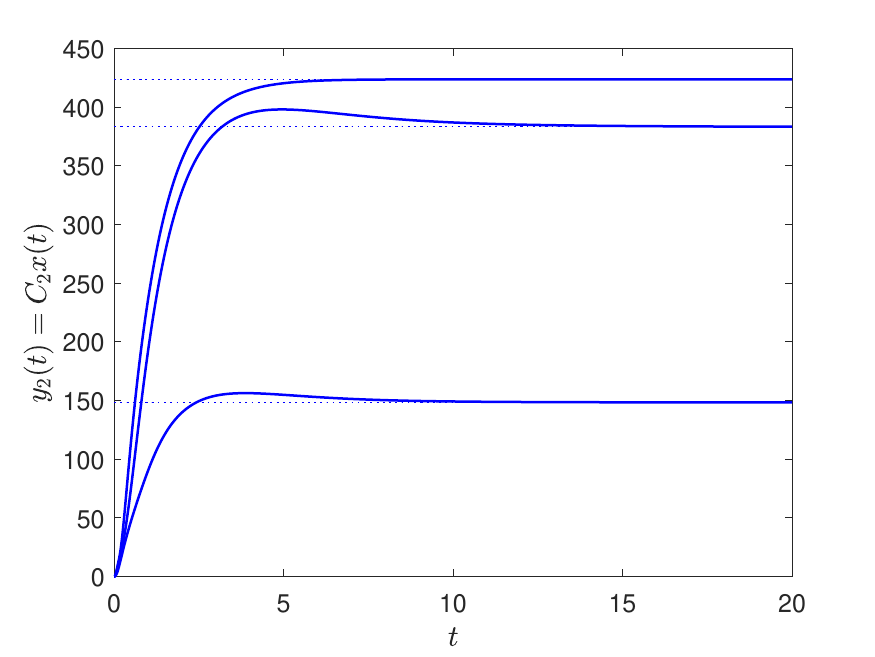}
    \caption{}
    \label{fig:ex_1_y1}
    \end{subfigure}
        \begin{subfigure}[]{0.425\textwidth}
    \centering
    \includegraphics[width = 0.95\textwidth]{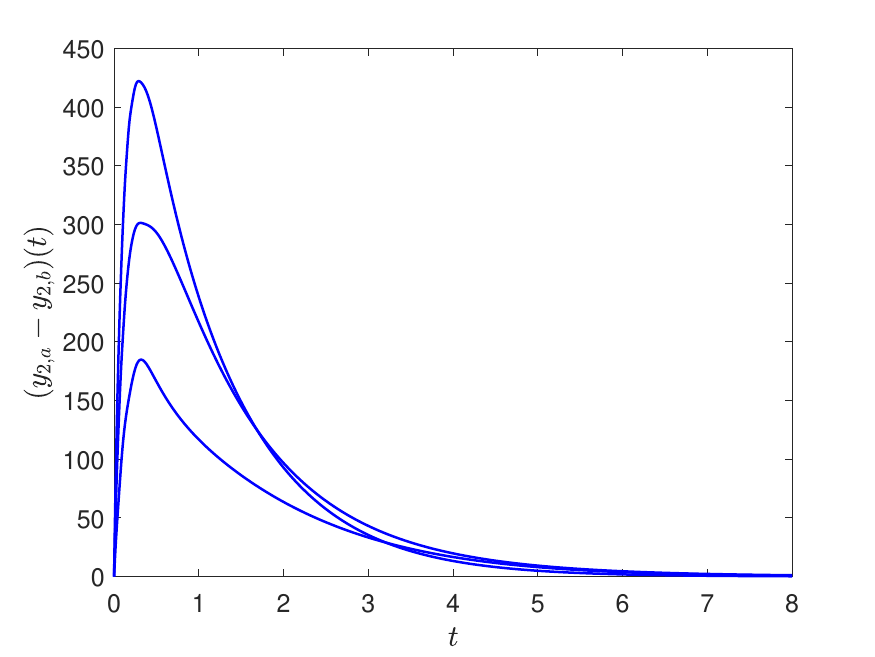}
    \caption{}
    \label{fig:ex_1_y2}
    \end{subfigure}

    \label{fig:ex_1}
    \caption{Simulation results from Example~\ref{ex:1}. See main text for a description.}
\end{figure}

\begin{example}\label{ex:2}
Consider~\eqref{eq:lure_4b} with 
\begin{align*}
     A := -(\tn+3)I, \quad B_1 := \mOne, \quad B_2 := e_1, \quad 
     C_1 := \mOne^\top , \quad C_2 := e_n^\top\,,
\end{align*}
for $\tn \in \mN$, and which satisfies~\ref{A1}. Here $e_i$ denotes the $i$-th standard basis vector in $\mR^\tn$. We take
\[ f(y) : = \frac{y}{1 + \lvert y \rvert} \quad \forall \: y \in \mR\,,\]
which is readily shown to satisfy~\ref{A2} with $\Delta :=1$. With $\tn = 4$ and $\xi: = 0.1$ we have that hypothesis~\ref{H2} is satisfied with
\begin{align*}
q & : = 1, \\
p^\top &:= -q C_2 (\xi I  + A+B_1C_1)^{-1} = \bpm{0.05  &  0.05&    0.05&   0.1949}, \\
r &:= p^\top B_2 = 0.05\,.     
\end{align*}
The hypotheses of Theorem~\ref{thm:incremental} and Corollary~\ref{cor:incremental_stability} are satisfied and so, in particular the inequality~\eqref{eq:incremental_corollary_H2} yields that, for input/state/output trajectories $(w_k,x_k,y_k)$ of~\eqref{eq:lure_4b} with $x_k(0) = 0$,
\[ \| y_{2,a} - y_{2,b} \|_{L^1_\xi(0,t \mc q)} \leq \| w_{a} - w_{b} \|_{L^1_\xi(0,t \mc r)} \quad \forall \: t \geq 0\,,\]
which here simplifies to
\[ \| x^4_{a} - x^4_b \|_{L^1_\xi(0,t)} \leq 0.05 \| w_{a} - w_{b} \|_{L^1_\xi(0,t)} \quad \forall \: t \geq 0\,.\]
As a numerical illustration of Proposition~\ref{prop:periodic}, we set
\[ w^{\rm ap}(t) : =  \sin(2\pi t) + \sin(2\sqrt{3}\pi t), \quad w^{\rm aap}(t) := 5t {\rm e}^{-t} + w^{\rm ap}(t) \quad \forall \: t \geq 0\,,\]
so that $w^{\rm ap} \in AP(\mR_+,\mR)$ and $w^{\rm aap} \in AAP(\mR_+,\mR)$. Note that $w^{\rm ap}$ is not periodic. Numerical simulations are shown in Figure~\ref{fig:ex_2_vp_a} where the performance output $x^4(t) = C_2 x(t)$ is plotted against $t$ for a different combination of initial state and forcing term. Finally, in Figure~\ref{fig:ex_2_vp_c} we plot the performance outputs $C_2 x(t;0,w^{\rm s})$ and $C_2 x(t;0,w^{\rm as})$ against $t$ where 
\begin{align*}
 w^{\rm s}(t) &:= 2+\sin\big({\rm mod}(t,3\pi/2)\big) + \sin\big(\sqrt{2}\,{\rm mod}(t,3\pi/(2\sqrt{2}))\big), &  w^{\rm as}(t) & := 5t {\rm e}^{-t/4} + w^{\rm s}(t)\,.
\end{align*}
Here, for $t\geq 0$ and $\tau >0$,
\[ {\rm mod}(t,\tau):=t-k\tau \in [0,\tau), \quad \text{where $k$ is the largest integer in $\mZ_+$ such that $t\geq k\tau$.}\]
The functions $w^{\rm s}$ and $w^{\rm as}$ are Stepanov almost periodic, and asymptotically Stepanov almost periodic, respectively. Note that $w^{\rm s} \not \in AP(\mR_+,\mR)$ as $w^{\rm s}$ is not continuous. A graph of $w^{\rm s}$ is plotted in Figure~\ref{fig:ex_2_vp_b}. Convergence of the output trajectories to one another is observed, as is convergence to an apparently almost periodic function.
\end{example}
%
%
\begin{figure}[h!]
    \centering
   \begin{subfigure}[]{0.5\textwidth}
   \centering
    \includegraphics[width = 0.95\textwidth]{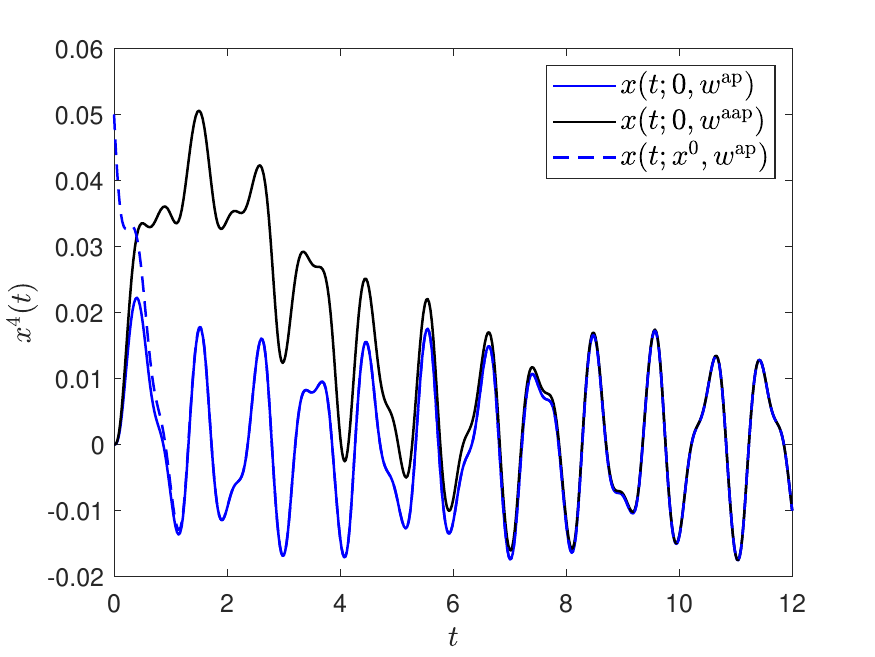}
    \caption{}
    \label{fig:ex_2_vp_a}
   \end{subfigure}%
     \begin{subfigure}[]{0.5\textwidth}
   \centering
    \includegraphics[width = 0.95\textwidth]{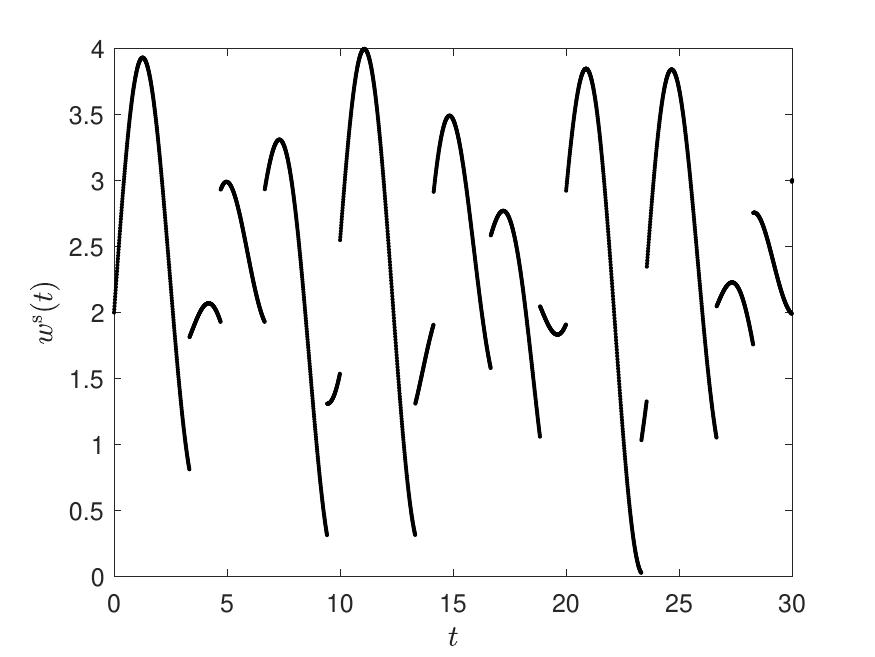}
    \caption{}
    \label{fig:ex_2_vp_b}
   \end{subfigure}\\
   \begin{subfigure}[]{0.95\textwidth}
\centering
    \includegraphics[width = 0.95\textwidth]{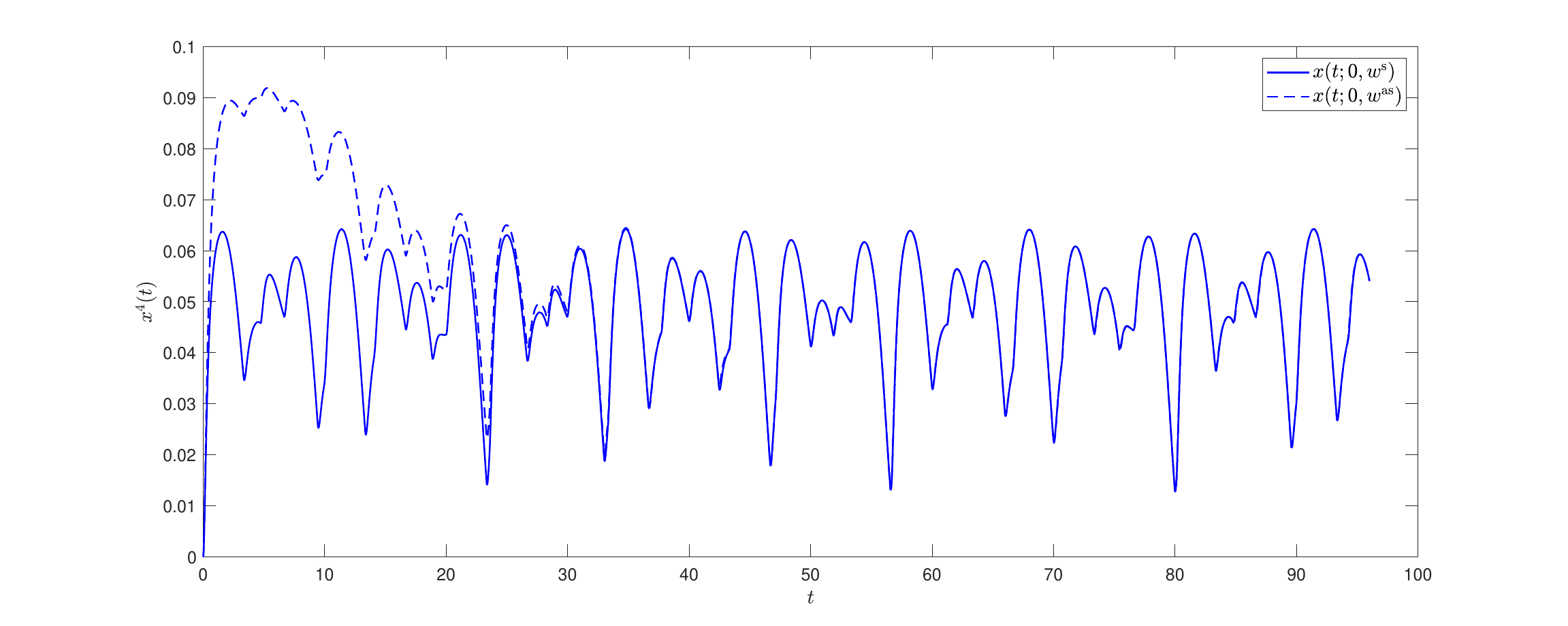}
   \caption{}
    \label{fig:ex_2_vp_c}
   \end{subfigure}
    \caption{Numerical simulations from Example~\ref{ex:2}. See main text for a description.}
    \label{fig:ex_2_vp}
\end{figure}
%
%
\section{Summary}\label{sec:summary}

Incremental stability properties have been considered for a class of systems of forced, positive Lur'e differential equations. The approach to incremental stability is via linear dissipativity theory, explored across~\cite{MR3126787,MR2104542,MR2655815,MR3356067}, which is natural when the underlying linear system is positive, and distinguishes the work from other incremental stability works by the authors, including~\cite{guiver2019infinite,gilmore2020infinite,gilmore2021incremental}. In another vein, the hypotheses of the real Aizerman conjecture for positive Lur'e systems, when combined with a linear incremental gain condition on the nonlinear component, are sufficient for various incremental stability notions. Our main incremental stability result is Theorem~\ref{thm:incremental}, which may be viewed as an ``incremental $(r,q)$-mass imbalance or dissipation'' for input/state/output trajectories of~\eqref{eq:lure_4b}.

As expected, incremental stability properties are sufficient for a range of convergence properties and (almost) periodic state and output responses to almost periodic inputs, as explored in Section~\ref{sec:consequences}. These are arguably necessary ingredients for a well-defined and tractable ``frequency response'' for forced Lur'e systems. We expect that versions of these results may be derived for systems of forced positive Lur'e difference equations, {\em mutatis mutandis}.

Finally, we comment that one motivation for the current work is the paper~\cite{revay2020lipschitz} which, roughly speaking, uses IQCs to derive incremental input-output  estimates for so-called equilibrium neural networks, which may be used to quantify their robustness. Briefly, and from a control theoretic perspective, equilibrium neural networks admit a description as the equilibrium of a forced Lur'e system with nonlinear term which satisfies a Lipschitz condition. Whilst our work makes essential use of the underlying positivity structure, should the networks considered in applications satisfy this structure, then our results become applicable.

\vspace{4ex}

{\large \bfseries Acknowledgements}

Chris Guiver's contribution to this work was supported by a Personal Research Fellowship from the Royal Society of Edinburgh. 
Chris Guiver expresses gratitude to the RSE for their support. The authors are grateful to Prof. Matthew Turner and Dr Ross Drummond for fruitful discussions related to the current work.

{\large \bfseries Statements and Declarations}

Data sharing not applicable to this article as no datasets were generated or analysed during the current study.

No generative AI was used in the production of this work.

The authors have no relevant financial or non-financial interests to disclose.

\appendix

\section{Appendix}

We provide proofs of technical lemmas not given in the main text.
%
%
\begin{proof}[Proof of Lemma~\ref{lem:H1}]
Since~$A + B_1 \Delta C_1$ is Metzler, hypothesis~\ref{H1} is equivalent to~$A + B_1 \Delta C_1$ being Hurwitz by, for example, \cite[Lemma 2.2, p.\ 31]{MR2655815}. This latter property is equivalent to statement~\ref{ls:H1_ii} by~\cite[Lemma 1]{drummond2022aizerman}.

We now prove the equivalence of statements~\ref{ls:H1_i} and~\ref{ls:H1_iii}. Assume first that~\ref{H1} holds. In particular
\[ p^\top A \leq p^\top (A+ B_1\Delta C_1) \leq -\xi p^\top\,,\]
so that~$A$ is Hurwitz and, hence, invertible. Seeking a contradiction, assume that~$\rho(\bG_{11}(0)\Delta)\geq 1$. In particular, there  exists~$\Sigma\in \mR_+^{\tm_1\times \tp_1}~$ with~$\Sigma\leq \Delta$ and nonzero~$w\in\mR_+^{\tp_1}$ such that~$\bG_{11}(0)\Sigma w = w$. Necessarily it must be the case that~$(-A)^{-1}B_1\Sigma w > 0$, as otherwise the previous equality gives the contradiction~$w=0$. 

On the one hand, we now compute that 
\[
    p^\top (A+B_1\Sigma C_1)(-A)^{-1}B_1\Sigma w = p^\top(-B_1 \Sigma w + B_1 \Sigma \bG_{11}(0)\Sigma w) = p^\top(-B_1 \Sigma w + B_1 \Sigma w) = 0,
\]
On the other hand, by hypothesis
    \[ p^\top (A+B_1\Sigma C_1)(-A)^{-1}B_1\Sigma w \leq p^\top (A+B_1\Delta C_1)(-A)^{-1}B_1\Sigma w \leq - \xi p^\top  (-A)^{-1}B_1\Sigma w <0\,,\]
    since~$p \gg 0$ and~$(-A)^{-1}B_1\Sigma w > 0$, which is a contradiction. 

Conversely, assume that~$A$ is Hurwitz and that~$\rho(\bG_{11}(0)\Delta)<1$. If~$\lambda \in \mC$ with~${\rm Re}(\lambda) \geq0$ and~$v \in \mC^\tn$ are such that
\[ (A+B_1\Delta C_1)v = \lambda v\,,\]
then~$\lambda$ is not an eigenvalue of~$A$, since~$A$ is Hurwitz, so that~$(\lambda I - A)^{-1} B_1 \Delta C_1v = v$ and, consequently,
\begin{equation}\label{eq:small_gain_MIMO_p1}
\bG_{11}(\lambda) \Delta C_1v = C_1v\,.
\end{equation} 
The positivity hypotheses in~\ref{A1} ensure that
\[ \lvert \bG_{11}(\lambda) \rvert = \Big \lvert \int_0^\infty \e^{-\lambda t} C_1 \e^{A t}B_1 \, \rd t \Big \rvert \leq \int_0^\infty \lvert \e^{-\lambda t} \rvert \cdot \lvert C_1 \e^{A t}B_1 \rvert \, \rd t \leq \int_0^\infty C_1 \e^{A t}B_1 \, \rd t = \bG_{11}(0)\,. \]
Hence, 
\[ \lvert \bG_{11}(\lambda) \Delta\rvert \leq \lvert \bG_{11}(\lambda) \rvert \cdot \lvert \Delta\rvert \leq \bG_{11}(0) \Delta\,.\]
Since~$\rho(M) \leq \rho(\lvert M \rvert)$ for all square matrices~$M$, we have that
\[ \rho(\bG_{11}(\lambda) \Delta) \leq \rho(\lvert \bG_{11}(\lambda) \Delta \rvert) \leq \rho(\bG_{11}(0) \Delta) <1\,,\]
and we conclude from~\eqref{eq:small_gain_MIMO_p1} that~$C_1v =0$. It now follows that~$v = 0$, since~$A$ is Hurwitz. 
\end{proof}

%
%
\begin{proof}[Proof of Lemma~\ref{lem:H2}]
The claim follows from an application of Lemma~\ref{lem:dissipation}. The given strictly positive hypothesis ensures that $p \gg0$. For the second condition, we invoke the  Sherman–Morrison–Woodbury formula to give
\begin{align*}
    ( -\xi I - (A+B_1 \Delta C_1))^{-1} & = \Big( (-\xi I - A)^{-1}  \\
    & \qquad + (-\xi I - A)^{-1}B_1\big( I - \Delta C_1 (-\xi I - A)^{-1} B_1 \big)^{-1} \Delta C_1 (-\xi I - A)^{-1}\Big)\,.
\end{align*}
Pre- and post-multiplying the right-hand side of the above by~$q^\top C_2$ and~$B_2$, respectively, gives the desired expression.
\end{proof}

%
%
\begin{proof}[Proof of Lemma~\ref{lem:CICS_technical}]
 Let~$w_*\in\mR^{\tm_2}$ be given. We proceed in two steps. 

{\sc Step 1: construction of~$x_*$.}
Consider the function 
\[
F : \mR^{\tp_1} \to \mR^{\tp_1}, \quad    F(y) := \bG_{11}(0)f(y)-C_1A^{-1}B_2w_* \quad \forall \: y \in \mR^{\tp_1}\,.
\]
The function~$F$ is a contraction in the~$\lvert \cdot \rvert_v$ norm, as seen from the estimates
\begin{align*}
    |F(y_1) - F(y_2)|_v &= v^{\top}|\bG_{11}(0)f(y_1) - \bG_{11}(0)f(y_2)| \notag \leq v^{\top}\bG_{11}(0)\Delta|y_1-y_2| \\
    & \leq  \rho |y_1-y_2|_v \quad \forall \: y_1, y_2\in \mR^{\tp_1} \,, 
\end{align*}
where we have invoked that~$\bG_{11}(0) \geq 0$, the incremental bound~\ref{A2}, and the inequality~\eqref{eq:G11_pf}. Therefore, by the contraction mapping theorem,~$F$ has a unique fixed point, which we denote~$y_*$. Now define
\begin{equation}
\label{eq:x*_def}
    x_* := -A^{-1}B_1f(y_*)-A^{-1}B_2w_*,
\end{equation}
and observe that
\begin{align*}
    Ax_*+B_1f(C_1x_*)+B_2w_* =& -B_1f(y_*) + B_1f(-C_1A^{-1}B_1f(y_*)-C_1A^{-1}B_2w_*)  \\
    & = -B_1f(y_*) + B_1f(F(y_*))= 0\,,
\end{align*}
since~$y_*$ is the fixed point of~$F$. We see that~$x_*$ satisfies~\eqref{eq:x*_constant}

{\sc Step 2: uniqueness of~$x_*$.} To establish uniqueness of~$x_*$ as solution of~\eqref{eq:x*_constant}, assume that~$x_\dagger \in \mR^{\tn}$ satisfies~\eqref{eq:x*_constant}. Multiplying both sides of this expression by~$C_1$ gives that
\[
    C_1x_\dagger  = -C_1A^{-1}B_1f(C_1x_\dagger )-C_1A^{-1}B_2w_*  = F(C_1x_\dagger ),
\]
and so, from the uniqueness of~$y^*$ as a fixed point of~$F$, it follows that~$C_1x_\dagger  = y^*$. Therefore, from~\eqref{eq:x*_def},
\[  x_* = -A^{-1}B_1f(y_*)-A^{-1}B_2w_* =  -A^{-1}B_1f(C_1 x_\dagger)-A^{-1}B_2w_* = x_\dagger\,,\]
as required. 

To prove the nonnegativity of~$x_*$ under the further assumptions, we in fact invoke statement~\ref{ls:convergence_equilibrium} of Proposition~\ref{prop:convergence}, which note, does not rely on the claimed positivity of~$x_* \in \mR^\tn$, so there is no circular reasoning. Taking nonnegative convergent~$w_a$, possible as~$w_* \geq 0$, and~$x_a(0) \in \mR^n_+$, it follows that~$x_a(t) \geq 0$ for all~$t \geq 0$. Therefore,~$x_* \geq 0$ as the limit as~$t \to \infty$ of~$x_a(t)$. \qedhere
\end{proof} 

%
%

\def\cprime{$'$} \def\cprime{$'$} \def\cprime{$'$}

\end{document}